\newcommand{\blind}{1} 
\newtheorem{prop}{Property}
\newcommand{\E}{\mathbb{E}}
\newcommand{\I}{\mathbf{I}}
\newcommand{\bt}{\mathbf{t}}
\newcommand{\bk}{\mathbf{k}}
\newcommand{\br}{\mathbf{r}}
\newcommand{\y}{\mathbf{y}}
\newcommand{\x}{\mathbf{x}}
\newcommand{\w}{\mathbf{w}}
\newcommand{\Btheta}{\boldsymbol{\theta}}
\newcommand{\SMproofs}{1}
\newcommand{\SMsplitmerge}{2}
\newcommand{\SMsmnHDP}{2.2}
\newcommand{\SMsimulations}{3}
\newcommand{\SMsimmixt}{3.1}
\newcommand{\SMsimmodel}{3.2}
\newcommand{\SMphilly}{4}
\newcommand{\SMfigure}{5}
\date{}
\begin{document}

\title{\bf Clustering Areal Units at Multiple Levels of Resolution to Model Crime in Philadelphia}
  \author[$\dag$]{Cecilia Balocchi}
\author[$\star$]{Edward I. George}
\author[$\star$]{Shane T. Jensen}
\affil[$\dag$]{{\small School of Mathematics, University of Edinburgh}}
\affil[$\star$]{{\small Department of Statistics, University of Pennsylvania}}
  \maketitle
  \vspace{-1.2cm}

\bigskip
\begin{abstract}
Estimation of the spatial heterogeneity in crime incidence across an entire city is an important step towards reducing crime and increasing our understanding of the physical and social functioning of urban environments.  This is a difficult modeling endeavor since crime incidence can vary smoothly across space and time but there also exist physical and social barriers that result in discontinuities in crime rates between different regions within a city.  A further difficulty is that there are different levels of resolution that can be used for defining regions of a city in order to analyze crime.  To address these challenges, we develop a Bayesian non-parametric approach for the clustering of urban areal units at different levels of resolution simultaneously.  Our approach is evaluated with an extensive synthetic data study and then applied to the estimation of crime incidence at various levels of resolution in the city of Philadelphia. 
\end{abstract}

\noindent%
{\it Keywords:}  Bayesian Hierarchical Modeling; Spatial Clustering; Hierarchical Dirichlet Process; Nested Dirichlet Process
\vfill

\onehalfspacing
\newpage

\section{Introduction}

The public availability of geocoded crime data provides the opportunity to analyze patterns in crime within urban environments at a higher resolution than ever before.  The focus on this paper will be to model the spatial heterogeneity in criminal activity in the city of Philadelphia, which is currently the sixth largest city in the United States.   Estimation of which local areas of a city have elevated levels of crime can potentially aid in efforts to reduce criminal activity as well as increase our scientific understanding about the physical and social functioning of cities.  Criminologists and urban planners are particularly interested in what aspects of the built environments are associated with elevated or reduced criminal activity.   Part of our investigation will examine our estimated crime patterns in the context of several historical theories and past studies from the urban planning and criminology literature.  

Modeling the spatial heterogeneity in crime across an entire city is a difficult endeavor as we expect that levels (and trends) in crime to vary somewhat smoothly across space and time in a city but we also must acknowledge that there exist both physical and social boundaries that may manifest as non-smooth breaks in crime patterns between proximal areas.  Adding to this challenge is that it is not clear what spatial resolution should be used when analyzing crime in urban environments, or even whether different spatial resolutions are appropriate for capturing the variation in crime in different regions of a large city.

The highest possible resolution for modeling crime is the geo-coded location of each individual reported crime, as the available crime data for Philadelphia gives the location of each reported crime as a GPS coordinate.  There have been many approaches to modeling geo-coded crime locations directly using spatial point processes, such as \cite{taddy2010autoregressive} and \cite{mohler2011self}.   However, a common alternative is to model crime as {\it areal data} where individual geo-coded crime locations are aggregated within a set of pre-defined geographic units \citep{aldor2016spatio,law2014bayesian, li2014space}.  

There are several compelling reasons for modeling crime occurrences within geographic {\it areal units}.  First,  policy initiatives and resources aimed towards reducing crime in cities are typically organized and enacted within areal units.   For example, the city of Philadelphia is organized into 21 police service areas (PSAs) and 66 police districts (PDs) by the Philadelphia Police Department for the purposes of administration and allocation of resources.   

Second, many important predictors or covariates of crime are only available as aggregate measures within areal units.  For example, socio-economic measures are predictive of crime in urban neighborhoods \citep{SamRauEar97}.   However, demographic and economic data is too sensitive to be made publicly available at the individual level and so the U.S. Census Bureau only makes available socio-economic measures aggregated at the level of their {\it census block} or {\it census block group} areal units.   In our crime analysis, we will focus on the 1336 U.S. Census block group areal units that are nested with 384 U.S. Census tract areal units within the city of Philadelphia.

Finally, modeling crime within areal units makes it easier to analyze the spatial heterogeneity in crime density in Philadelphia by providing an interpretation of these areal units as ``neighborhoods" of the city that each have unique features that could be associated with crime.  For example, \cite{humphrey2020urban} examined associations between crime and aspects of the built environment at the neighborhood level in Philadelphia, where neighborhoods were defined by the U.S. census block group areal units.   

In fact, there are many historical theories in criminology and urban planning about the connection between crime and different aspects of the surrounding neighborhood.  Several studies in criminology have found that crime tends to spatially concentrate at micro-levels \citep{weisburd2015law, lee2017concentrated}.   Defensible space theory \citep{New72}, crime prevention by environmental design \citep{CozSavHil05} and routine activities theory \citep{CohFel79} argue that the built environment in urban neighborhoods can impact propensities of offenders and availability of suitable targets for crime.  \cite{Mac15} and \cite{Macbrasto19} review many empirical studies have shown that crime incidence is associated with differences between local urban areas in terms of green space, zoning and public transit. 

However, when defining particular areal units as ``neighborhoods" for the purposes of analyzing crime in a city, it is important to acknowledge that crime levels may have more or less spatial variation, and that different levels of resolution may be more appropriate, in different parts of a large city.   Simultaneous modeling of crime between areal units across multiple levels of resolution is the focus of the methodological contributions of this paper. 

For areal units at a particular level of resolution, such as U.S. census block groups, we need a spatial modeling approach that allows for the sharing of information between areal units (neighborhoods) while also recognizing the possibility of sharp discontinuities in crime rates due to physical barriers such as rivers or social barriers such as differences in community organization between bordering neighborhoods.  

Modeling strategies based on the {\it clustering of areal units} can address these characteristics and are commonly used in spatial data situations such as disease mapping \citep{knorr2000bayesian,feng2016spatial,denison2001bayesian,anderson2017spatial} as well as crime \citep{Balocchi2019Clustering}.  

Clustering approaches also have advantages in terms of model interpretation, as they can reduce the dimensionality of the units of analysis from a potentially large number of individual areal units down to a smaller number of areal unit clusters.  This is an important feature for our application, as we want to identify regions of the city of Philadelphia that share similar crime dynamics in order to improve our understanding of what neighborhood factors are related to crime density.  

However, a crucial issue for investigating crime in Philadelphia (and spatial data modeling more generally) is choosing a particular {\it resolution} of areal units that should be the focus of our study or alternatively whether we can model crime density across {\it multiple resolutions} of areal units simultaneously.   

For example, as we mention above the city of Philadelphia is divided up into 21 police service areas and 66 police districts.  The U.S. Census Bureau provide even higher resolution options that could be used as areal units in our analysis: the city of Philadelphia is divided up into 384 census tracts which are further divided into 1336 census block groups.  Should we be modeling crime using the lower resolution census tracts or the higher resolution census block groups as our areal units? 

The vast majority of previous approaches to modeling crime, and spatial data more generally, fix a particular level of resolution for their areal units and only work with that choice of areal units for the entirety of their analysis \citep{feng2016spatial,anderson2017spatial,BalJen19}.  Some studies in criminology have investigated the best resolution for analyzing crime \citep{malleson2019identifying} whereas other studies have examined crime across different resolutions \citep{schnell2017influence,steenbeek2016action}.  

However, instead of fixing the level of resolution at which to perform our analysis, we want to model crime (and cluster areal units) in Philadelphia {\it simultaneously across multiple levels of resolution}.   We have already discussed several motivations for taking a clustering approach, but there are also several reasons for clustering areal units at different levels of resolution simultaneously.

One reason that is especially relevant to our study of crime is that one level of resolution may be most appropriate for certain areas of Philadelphia whereas another level of resolution may be more appropriate for other parts of the city.  For example, the spatial variation in crime density could be much higher in the more densely populated and heterogeneous central neighborhoods of Philadelphia compared with the more residential and homogeneous surrounding suburban neighborhoods.  If this is the case, we would ideally model crime density with a finer granularity of areal units in the central regions of the city compared to a coarser granularity of areal units in the outlying regions of the city.  This can be accomplished by simultaneously clustering areal units on multiple levels of resolution.   

This is a general issue beyond our application to crime in Philadelphia, as there are many spatial data situations in complex environments where one would expect that spatial variation is higher or lower in some regions under study compared to others, e.g. disease mapping, climate modeling, etc. 
Standard clustering methods that focus on a single level of granularity might not be able to capture those differences in variation.  Therefore, instead of restricting an analysis to a specific level of resolution, it can be beneficial to allow multiple levels of resolution that are each appropriate for subsets of our data.  

Another data situation where simultaneous modeling at multiple levels of resolution modeling can be beneficial is when part of the data, such as important predictor variables, are unavailable at the desired level of resolution.  Consider an analysis where we would like to cluster census blocks in a city based on crime density after accounting for measures of poverty via a regression model.  However, due to privacy issues, poverty data is only available from the US Census Bureau at the census block group level, which is a coarser resolution than the census block level.  In this case, the clustering of areal units at the desired higher resolution level of census blocks can be informed by a multi-level clustering model that also clusters areal units at the lower resolution level of census block groups at which we can condition on measures of poverty. 

Finally, a general goal of spatial modeling is capturing the dependency shared by areal units that are located in close geographical proximity to each other.  Parametric models can explicitly incorporate this dependency via spatial autocorrelation parameters, e.g. \cite{BalJen19}, though parametric approaches can have substantial modeling and computational challenges \citep{PacLeS10}.   The model for spatial clustering at multiple resolutions that we develop represents a less parametric alternative that will implicitly induce spatial dependence between geographically proximal areal units at a higher resolution level through the simultaneous clustering of the areal units at a lower resolution level.   

The multi-level modeling of areal units has been considered in various disciplines \citep{arcaya2012area,geronimus1998use,pickett2001multilevel,krieger2002geocoding}, but not specifically focussed on the problem of clustering areal units within a fixed hierarchical structure.  By fixed hierarchical structure, we are referring to our situation of having spatially proximal areal units at a higher resolution level (such as census block groups) that are nested within the areal units at the next lower resolution level (such as census tracts).   The approach of \citep{estivill2002multi} tries to find a hierarchical structure by finding clusters within clusters, but do not consider our situation of a fixed hierarchical structure of nested areal units across different levels of resolution.  

That said, neighborhoods that are not spatially proximal to each other in Philadelphia could still share similar levels of crime as well as socioeconomic situations. Moreover, they could have substantially different low-resolution crime density while still sharing similar high-resolution crime density, and so we do not want to restrict our clustering of areal units to only find partitions that are nested, i.e. high-resolution clusters nested within low-resolution clusters.  
In our modeling approach, we will need to acknowledge the fixed hierarchical structure of our areal units in the city of Philadelphia, but we also will need to build in the flexibility to potentially cluster high-resolution areal units together even if their corresponding low-resolution units are not in the same cluster. 

Before outlining our own methodological developments, we first review some recent modeling approaches that we will build upon to address our goal of simultaneously clustering areal units across multiple levels of resolution.  Bayesian non-parametric and semi-parametric models \citep{MulQui04} based on the Dirichlet process prior \citep{Fer73} have become an increasingly popular general approach for clustering observations and/or latent variables.  These models have the advantage over common approaches such as K-means clustering in that the number of clusters does not have to be fixed or pre-specified.  

\cite{teh2006hierarchical} developed a hierarchical Dirichlet Process model (HDP) for grouped data where the groups are known and fixed, whereas the nested Dirichlet process (nDP) of \citep{rodriguez2008nested} estimates the group structure from the data itself.   Thus, a nested Dirichlet Process model can be used for multi-level clustering of both observations and the groups containing those observations while the hierarchical Dirichlet Process model only produces a clustering of the observations.  

However, a key restriction of the nested Dirichlet Process is that clusters at finer levels are forced to be nested within the clusters at coarser levels, as shown in \cite{camerlenghi2018latent}.   As we discussed above, this characteristic of the nDP model is not well suited for our data situation as it does not allow the flexibility to potentially cluster together high-resolution areal units from groups that are not clustered together at the low-resolution level. In Section~\ref{sec:methods}, we will develop a nested Hierarchical Dirichlet Process (nHDP) that combines elements of both the hierarchical Dirichlet process and the nested Dirichlet process in order to simultaneously cluster areal units across multiple levels of resolution.

There has been related work on models for clustering outside of spatial data situations, such as hierarchical topic models \citep{blei2010nested,paisley2014nested,nguyen2014bayesian} and entity topic models \citep{tekumalla2015nested} 
for text data as well as hierarchical clustering of neurological data \citep{wulsin2016nonparametric}, microbiome data \citep{denti2021common} and educational data \citep{beraha2021semi}.   Although \cite{paisley2014nested} also refer to their framework as a nested hierarchical Dirichlet process, the models developed by \cite{blei2010nested} and \cite{paisley2014nested} are designed for a different data context that has an unbounded number of resolution levels.  The models developed by \cite{denti2021common} and \cite{beraha2021semi} combine the nested Dirichlet process with models suitable for grouped data, but with a different approach from the nested HDP. \cite{agrawal2013nested}, \cite{wulsin2016nonparametric} and \cite{nguyen2014bayesian} apply a similar approach in different data contexts, such as the particular case where additional group-level data is available \citep{nguyen2014bayesian}.  \cite{lijoi2022flexible} focuses on the theoretical aspects of multi-level clustering models.  This situation has also been addressed by optimization approaches that extend K-means clustering to distributional spaces using the Wasserstein distance \citep{ho2017multilevel,huynh2019efficient}.

We also note that our problem of multi-level clustering is different from \textit{hierarchical clustering}, where one set of units is recursively divided into clusters, creating several nested partitions of the same data, with clusters within clusters. In our context instead, multiple sets of units exists, one for each resolution level, with a fixed hierarchical structure connecting units in different levels; the clustering finds one partition for each resolution, with not necessarily nested clusters.

The rest of this paper is organized as follows.  We introduce our nested Hierarchical Dirichlet Process (nHDP) approach to simultaneously cluster areal units across multiple levels of resolution in Section~\ref{sec:methods}.  We evaluate the operating characteristics of our nHDP model across a variety of synthetic data situations in Section~\ref{sec:simulation}.   In Section~\ref{sec:application}, we apply our nHDP approach to crime density within areal units at various levels of resolution in the city of Philadelphia.  We examine several interesting findings in detail and then conclude with a brief discussion in Section~\ref{sec:discussion}.

\section{Multi-resolution Clustering Methodology} \label{sec:methods}

\subsection{Areal Units at Multiple Levels of Resolution}

Consider a geographic area of interest $\mathcal{A}$ which is subdivided into areal units at multiple levels of resolution.  For notational simplicity, we consider two granularity levels, a {\it coarser} (or {\it lower}) resolution and a {\it finer} or ({\it higher}) resolution, although our methodology can also be extended to more than two levels of resolution.  

At the lower resolution level, $\mathcal{A}$ is divided up into $L$ areal units that we denote as $\mathcal{L} = \{ A_1, A_2, \ldots, A_{L} \}$.  
The higher resolution level is a further division of $\mathcal{A}$ that is {\it nested} within the lower resolution level.  Each of the lower resolution areal units $A_\ell$ is divided up into $n_\ell$ higher resolution units: $A_\ell = \cup_{h=1}^{n_\ell} A_{\ell,h}$.   We denote the entire collection of the higher resolution areal units as $\mathcal{H} = \{ A_{1,1}, \ldots, A_{1,n_1}, \ldots, A_{L,1}, \ldots, A_{L,n_L} \}$.  

As part of our study of crime dynamics in Philadelphia, we want to use a model that clusters areal units simultaneously at both the lower and higher levels of resolution.  In other words, part of our modeling process will be simultaneously estimating a latent partition $\gamma^{\mathcal{L}}$ of the areal units $\mathcal{L}$ at the lower level of resolution and a latent partition $\gamma^{\mathcal{H}}$ of the areal units $\mathcal{H}$ at the higher level of resolution.  

We want to emphasize that although the higher resolution areal units $\mathcal{H}$ are nested in the lower resolution areal units $\mathcal{L}$, we do not want to restrict ourselves to estimating latent partitions $\gamma^{\mathcal{H}}$ that are nested within $\gamma^{\mathcal{L}}$.   As we describe in our introduction, we want the flexibility to potentially cluster high resolution units together even if their corresponding lower resolution units are not clustered together.  

We will be clustering areal units at both resolution levels based upon the observed crime density $y_{\ell,h}$ in each areal unit $A_{\ell,h}$ at the higher resolution level, though we will also consider aggregated crime density at the lower resolution level.  We denote with $\theta_{\ell,h}$ the parameters underlying the observed crime density $y_{\ell,h}$ in each higher resolution areal unit $A_{\ell,h}$.   We take a Bayesian non-parametric approach where the parameters $\theta_{\ell,h}$ are modeled as a latent discrete mixture where we say that $A_{\ell,h}$ and $A_{\ell',h'}$ are in the same cluster if and only if $\theta_{\ell,h} = \theta_{\ell',h'}$.   We will postpone a more detailed discussion of the model for the observed crime data in Philadelphia until we have reviewed different Bayesian nonparametric prior distributions for latent mixture models. 

\subsection{Prior Distributions for Latent Mixtures and Partitions}

Our modeling goal is to estimate latent partition of the areal units $A_{\ell,h}$ at the higher resolution level based on unit-specific crime density parameters $\theta_{\ell,h}$, while also simultaneously estimating a latent partition of the areal units $A_\ell$ at the lower level of resolution (within which the higher resolution units $A_{\ell,h}$ are nested).   

There are many common Bayesian non-parametric approaches for estimating a partition of unobserved parameters $\theta_{\ell,h}$.  Most of these approaches are built around the Dirichlet Process (DP) prior that was first described by \cite{ferguson1973bayesian}.   The Dirichlet Process is a distribution over random probability distributions that is characterized by a concentration parameter $\alpha_0 >0$ and a base distribution $H_0$. 

Even though the base distribution $H_0$ is typically continuous, a realization $G$ from the Dirichlet Process is almost surely discrete and can be written as $G = \sum p_k \delta_{\theta^*_k}$, where $\delta_{\theta^*_k}$ represent the atoms of $G$ and $p_k$ the probability associated with $\theta^*_k$.  According to the {\it stick-breaking construction} of $G$ \citep{sethuraman1994constructive}, the atom locations $\theta^*_k$ are i.i.d. random variables distributed according to $H_0$ and the probabilities $p_k = b_k \prod_{j=1}^{k-1}(1-b_j)$ where $b_j \overset{iid}{\sim} {\rm Beta}(1, \alpha_0)$.

So if we assumed a DP prior for the high resolution crime density parameters, $\theta_{\ell,h} \sim G$ with $G \sim DP(\alpha_0, H_0)$, then the atoms of $G$ would form a latent partition $\gamma^{\mathcal{H}}$ of the high resolution areal units with clusters corresponding to the $K$ unique atoms of $G$, i.e. $\gamma^{\mathcal{H}} = \{ A_{\ell,h}: \theta_{\ell,h} = \theta^*_{k} \}$.  The distribution of clusters formed by a realization from a Dirichlet process prior is also called the {\it Chinese Restaurant Process} \citep{aldous1985exchangeability}.  However, this Dirichlet process prior formulation ignores the natural groupings of the high resolutions areal units $A_{\ell,h}$ formed by their nested structure within the lower resolution units $A_{\ell}$.   

The Hierarchical Dirichlet Process (HDP) of \cite{teh2006hierarchical} is an extension of the Dirichlet Process for grouped data in which each group $\ell$ of units is assigned a group-specific discrete measure $G_\ell$ that is a realization from a Dirichlet Process with a discrete base measure $G_0$. The discrete base measure $G_0$ is itself a realization from a Dirichlet Process.  So in the HDP formulation, the high resolution crime density parameters, $(\theta_{\ell,h})_{h=1}^{n_\ell} \overset{iid}{\sim} G_\ell$ where $(G_\ell)_{\ell=1}^L \overset{iid}{\sim} DP(\alpha_1, G_0)$ with $G_0 \sim DP(\alpha_0, H_0)$.  The distribution of clusters formed by a realization from a Dirichlet process prior is also called the {\it Chinese Restaurant Franchise} \citep{teh2006hierarchical}.

The discreteness of the base measure $G_0$ implies that all groups $G_\ell$ share the same set of atoms $\theta^*_{k}$ of $G_0$, and so areal units $A_{\ell_1,h}$ and $A_{\ell_2,h}$ from two different low resolution groups $\ell_1$ and $\ell_2$ can be clustered together in a latent partition $\gamma^{\mathcal{H}}$ of the high resolution areal units.  In contrast to the DP, this HDP formulation does acknowledge differences between the low resolution areal units $A_{\ell}$ through the group-specific discrete distributions $G_\ell$.  However, the HDP does not capture any similarities between the low resolution areal units $A_{\ell}$ nor does it produce a partition $\gamma^{\mathcal{L}}$ of the low resolution areal units. See the top panel of Figure~\ref{fig:plate} for representation of this model in the plate notation.

A different extension of the Dirichlet process that would allow simultaneous clustering of both the low resolution areal units $A_{\ell}$ and high resolutions areal units $A_{\ell,h}$ is the nested Dirichlet Process (nDP) of \cite{rodriguez2008nested}.   Similar to the HDP, the nDP is designed for grouped data in which the parameters for each group $\ell$ of units is assigned a group-specific discrete prior measure, $(\theta_{\ell,h})_{h=1}^{n_\ell} \overset{iid}{\sim} G_\ell$.    However, in the nDP formulation, these group-specific measures $G_\ell$ are independently generated from a {\it nested} Dirichlet process: $(G_\ell)_{\ell=1}^L \overset{iid}{\sim} Q$ where $Q \sim DP(\alpha_1, DP(\alpha_0, H_0))$. 

Since $Q$ is a realization from a Dirichlet process, it can be written as $Q = \sum p_j \delta_{Q^*_j}$, where the atoms $Q^*_j$ are themselves each separate realizations from a Dirichlet process $Q^*_j \sim DP(\alpha_0, H_0)$.  The discreteness of $Q$ means that two low resolution areal units $A_{\ell_1}$ and $A_{\ell_2}$ can share the same base distribution atom $Q^*_j$, which generates a partition $\gamma^{\mathcal{L}}$ of the low resolution areal units.  In other words, if $G_{\ell_1} = G_{\ell_2} = Q^*_j$ for some $j$, then the low resolution areal units $A_{\ell_1}$ and $A_{\ell_2}$ are clustered together in $\gamma^{\mathcal{L}}$.   

Since each atom $Q^*_j$ of $Q$ is itself a realization from the Dirichlet process, it can be written as $Q^*_j = \sum_k p_{jk} \delta_{\theta^*_{j,k}}$ where $\theta^*_{j,k} \overset{iid}{\sim} H_0$.   This discreteness of each $Q^*_j$ also means that two high resolution areal units $A_{\ell_1,h_1}$ and $A_{\ell_2, h_2}$ (contained within low resolution units with the same $Q^*_j$) can share the same atoms which generates a partition $\gamma^{\mathcal{H}}$ of the high resolution areal units.  In other words, if $G_{\ell_1} = G_{\ell_2} = Q^*_j$ and if $\theta_{\ell_1,h_1} = \theta_{\ell_2,h_2} = \theta^*_{j,k}$ for some $k$, then the high resolution areal units $A_{\ell_1,h_1}$ and $A_{\ell_2,h_2}$ are clustered together in $\gamma^{\mathcal{H}}$.  See the bottom left panel of Figure~\ref{fig:plate} for a plate diagram of this model.

However, as shown by \cite{camerlenghi2018latent}, this nDP formulation imposes a restriction on the high and low resolution partitions. The two partitions induced by the nested Dirichlet Process are themselves {\it nested}: a cluster in the partition $\gamma^{\mathcal{H}}$ contains high resolution units from different low resolution units only if those low resolution units are clustered together in $\gamma^{\mathcal{L}}$.  The reason for this behavior lies in the fact that the atoms $Q^*_j$ of $Q$ are independent realizations from a Dirichlet Process and thus their supports do not overlap almost surely.  If two low resolution areal units $A_{\ell_1}$ and $A_{\ell_2}$ are not in the same cluster, then $G_{\ell_1} = Q^*_{j}$ and $G_{\ell_1} = Q^*_{j^\prime}$ where the atoms of $Q^*_{j}$ and $Q^*_{j^\prime}$ are almost surely different since their own atoms $\theta^*_{j,k}$ are generated independently from a non-atomic base measure $H_0$ and so there is zero probability that $\theta^*_{j,k}$ is equal to $\theta^*_{j^\prime,k^\prime}$ for any $k$ and $k^\prime$.    

This property is quite restrictive in our application to clustering neighborhoods based on crime (and multi-resolution clustering more generally).  We  want to allow for the possibility that smaller areas (high resolution units) in Philadelphia share similar crime density even if their larger regions (low resolution units) do not share similar crime density.  For example, we anticipate that two adjacent low resolution units could belong to separate clusters because of their overall crime density, but could contain high resolution units on either side of their shared border that display a similar crime density and hence we would want to cluster them together.  Since the nested Dirichlet process only allows for nested partitions, it can not accommodate these types of data situations.

\subsection{The Nested Hierarchical Dirichlet Process for Multi-resolution Clustering}\label{chap4:sec:nHDP}

To allow extra flexibility to accommodate the types of data situations mentioned above, we have developed a new prior formulation for multi-resolution clustering that combines aspects of both the nested Dirichlet Process of \cite{rodriguez2008nested} and the Hierarchical Dirichlet Process of \cite{teh2006hierarchical}. 

We continue to consider the underlying parameters $(\theta_{\ell,h})_{h=1}^{n_\ell}$ of crime density in our high resolution areal units $A_{\ell,h}$ as grouped random variables where the groups $\ell$ represent the low resolution units $A_{\ell}$ within which the high resolution areal units are nested.   The parameters $(\theta_{\ell,h})_{h=1}^{n_\ell}$ within each group have their own group-specific priors $G_\ell$ which are realizations from a common prior measure $Q$ shared by all groups.    Similar to the nested Dirichlet Process, $Q$ is a discrete distribution, $Q = \sum p_j \delta_{Q^*_j}$, whose atoms $Q^*_{j}$ are themselves discrete distributions and $p_j = b_j \prod_{l=1}^{j-1}(1-b_l)$ where $b_l \overset{iid}{\sim} {\rm Beta}(1, \alpha_2)$.  

However, unlike the nested DP where the atoms $Q^*_{j}$ are realizations from a Dirichlet Process, in our nested hierarchical Dirichlet process (nHDP) the atoms $Q^*_{j}$ are 
realizations from a hierarchical Dirichlet process: $Q^*_j \sim DP(\alpha_1, G_0)$ where $G_0 \sim DP(\alpha_0, H_0)$.   So the atoms $Q^*_j = \sum_k p_{jk} \delta_{\theta^*_{k}}$ where the atoms $\theta^*_{k}$ are realizations from a continuous prior measure $H_0$.   For notational compactness, we write this prior model as $Q \sim nHDP(\alpha_0, \alpha_1, \alpha_2, H_0)$. 

When two group-specific priors $G_{\ell_1}$ and $G_{\ell_2}$ share the same atom $Q^*_{j}$, then the low resolution areal units $A_{\ell_1}$ and $A_{\ell_2}$ are clustered together, which gives a partition $\gamma^{\mathcal{L}}$ of the low resolution areal units.  However, our nHDP formulation does allow for more flexible partitions $\gamma^{\mathcal{H}}$ of the high resolution areal units than the nested Dirichlet Process.  In contrast to the nDP, when two low resolution units are not clustered together, i.e. $G_{\ell_1} = Q^*_{j} \neq G_{\ell_2} = Q^*_{j'}$, the high resolution areal units that they contain can still be clustered together since $Q^*_{j}$ and $Q^*_{j'}$ can share the same atoms $\theta^*_k$ and so there is a positive probability that $\theta_{\ell_1,h_1} = \theta_{\ell_2,h_2}$ which means that $A_{\ell_1,h_1}$ and $A_{\ell_2, h_2}$.   This property is especially important for multiple resolution clustering, as it does not force the partition $\gamma^{\mathcal{H}}$ of high resolution areal units to be nested within the partition $\gamma^{\mathcal{H}}$ of low resolution areal units as discussed at the end of the previous subsection.

In the bottom right panel of Figure~\ref{fig:plate} we represent the nested hierarchical Dirichlet process with plate notation. The diagram shows that the low-resolution group-specific measures $G_\ell$ are drawn from a discrete measure $Q$ in the nHDP, similar to the nDP, but that the atoms of $Q$ differ between these two models.  In Section~S\SMfigure~of our supplementary materials, we provide another graphical representation that highlights the difference in atom locations in the $Q^*_j$ of the nDP and  nHDP. 

Compared to some recent approaches, the model proposed in \cite{camerlenghi2018latent} is different from the nHDP as it considers only the case of two groups and extends the nDP by writing the group-specific measures $G_\ell$ as a mixture of a shared distribution and a group specific DP. 
The model proposed in \cite{denti2021common} constructs the $Q_j^*$ with weights distributed according to a stick-breaking prior (like the DP), but sharing the same atoms. This construction avoids the restriction on nested partitions imposed by the nDP, but does not induce any similarity in the weights of the different $Q^*_j$, losing any borrowing of strength across groups that are not in the same cluster. \cite{beraha2021semi} propose a similar model that modifies the $Q_j^*$ to be drawn from a DP where the base distribution is a mixture between a discrete and a continuous distribution. This approach also circumvents the restriction to nested partitions by creating group-specific distributions that share some atoms but are allowed to have different supports. This feature might be better suited for testing equality of group distributions, but makes the model much more complex for our purpose of multi-resolution clustering.

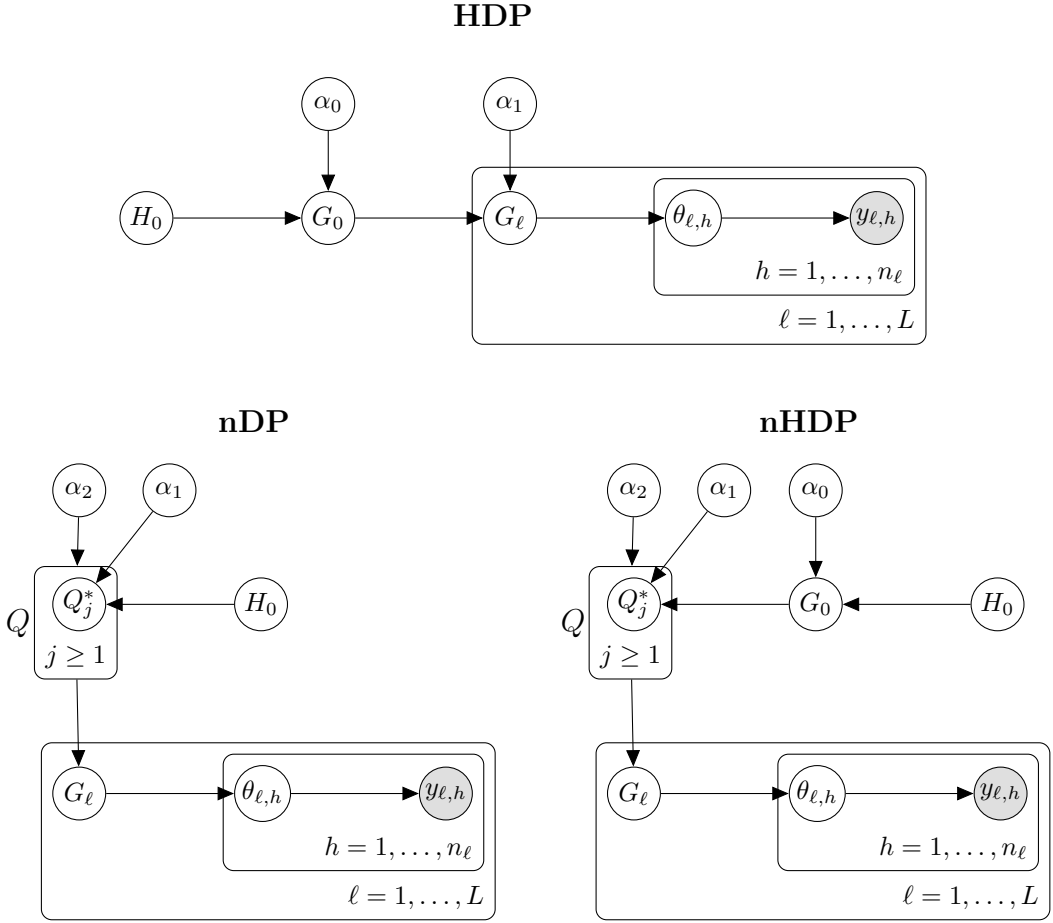
\begin{figure}[H]
\centering
\begin{tikzpicture}[x=1.7cm,y=1.8cm]


  \node[obs]                   (Y)      {$y_{\ell,h}$} ; %
  \node[latent, left=of Y]    (theta)      {$\theta_{\ell,h}$} ; %

  \edge {theta} {Y}
  \plate {plate1} { %
    (Y)
    (theta)
  } {$h = 1, \ldots, n_\ell$}; %

  \node[latent, left=of theta]    (Gl)      {$G_{\ell}$} ; %
  \edge {Gl} {theta}

  \plate {plate2} { %
    (plate1)
    (Gl)
  } {$\ell = 1, \ldots, L$}; %

  \node[latent, left=of Gl] (G0) {$G_0$} ;
  \edge {G0} {Gl};

  \node[latent, left=of G0] (H0) {$H_0$} ;
  \edge {H0} {G0};

  \node[latent, above=of G0, yshift=-1cm] (alpha0) {$\alpha_0$} ;
  \edge {alpha0} {G0};

  \node[latent, above=of Gl, yshift=-1cm] (alpha1) {$\alpha_1$} ;
  \edge {alpha1} {Gl};

  \node at (-3,1.5) {\textbf{HDP}};
\end{tikzpicture} \\
\vspace{20pt}

\begin{tikzpicture}[x=1.7cm,y=1.8cm]


  \node[obs]                   (Y)      {$y_{\ell,h}$} ; %
  \node[latent, left=of Y]    (theta)      {$\theta_{\ell,h}$} ; %

  \edge {theta} {Y}
  \plate {plate1} { %
    (Y)
    (theta)
  } {$h = 1, \ldots, n_\ell$}; %

  \node[latent, left=of theta]    (Gl)      {$G_{\ell}$} ; %
  \edge {Gl} {theta}

  \plate {plate2} { %
    (plate1)
    (Gl)
  } {$\ell = 1, \ldots, L$}; %

  \node[latent, above=of Gl] (Qjstar) {$Q^*_{j}$} ;

  \plate {plate3} { %
    (Qjstar)
  } {$j \geq 1$};
  \edge {plate3} {Gl};

  \node[left=of plate3, xshift=+1.1cm] (Q) {$Q$} ;

  \node[latent, right=of Qjstar] (H0) {$H_0$} ;
  \edge {H0} {Qjstar};

  \node[latent, above=of Qjstar, yshift=-1cm, xshift=+1.2cm] (alpha1) {$\alpha_1$} ;
  \edge {alpha1} {Qjstar};

  \node[latent, above=of Qjstar, yshift=-1cm] (alpha2) {$\alpha_2$} ;
  \edge {alpha2} {plate3};

   \node at (-1.5,2.75) {\textbf{nDP}};
\end{tikzpicture}
~ \hspace{5pt}
\begin{tikzpicture}[x=1.7cm,y=1.8cm]


  \node[obs]                   (Y)      {$y_{\ell,h}$} ; %
  \node[latent, left=of Y]    (theta)      {$\theta_{\ell,h}$} ; %

  \edge {theta} {Y}
  \plate {plate1} { %
    (Y)
    (theta)
  } {$h = 1, \ldots, n_\ell$}; %

  \node[latent, left=of theta]    (Gl)      {$G_{\ell}$} ; %
  \edge {Gl} {theta}

  \plate {plate2} { %
    (plate1)
    (Gl)
  } {$\ell = 1, \ldots, L$}; %

  \node[latent, above=of Gl] (Qjstar) {$Q^*_{j}$} ;

  \plate {plate3} { %
    (Qjstar)
  } {$j \geq 1$};
  \edge {plate3} {Gl};

  \node[left=of plate3, xshift=+1.1cm] (Q) {$Q$} ;

  \node[latent, right=of Qjstar] (G0) {$G_0$} ;
  \edge {G0} {Qjstar};

  \node[latent, right=of G0] (H0) {$H_0$} ;
  \edge {H0} {G0};

  \node[latent, above=of G0, yshift=-1cm] (alpha0) {$\alpha_0$} ;
  \edge {alpha0} {G0};

  \node[latent, above=of Qjstar, yshift=-1cm, xshift=+1.2cm] (alpha1) {$\alpha_1$} ;
  \edge {alpha1} {Qjstar};

  \node[latent, above=of Qjstar, yshift=-1cm] (alpha2) {$\alpha_2$} ;
  \edge {alpha2} {plate3};

  \node at (-1.5,2.75) {\textbf{nHDP}};
\end{tikzpicture}

\caption[Plate diagram of the nHDP.]{Plate diagram of the HDP, nDP and nHDP models. The realizations of low-resolution distributions $G_\ell$ affect the realization of the high-resolution mixture components $\theta_{\ell,h}$, which affect the distribution of the data $y_{\ell,h}$. In the HDP model the $G_\ell$ are directly realization of the Hierarchic Dirichlet Process. In the nDP and in the nHDP models, the $G_\ell$ are drawn from the discrete measure $Q$, whose atoms are realizations respectively from a Dirichlet Process and a Hierarchical Dirichlet Process.
}
\label{fig:plate}
\end{figure}

Our nested hierarchical Dirichlet process induces particular distributions over the partitions $\gamma^{\mathcal{L}}$ and $\gamma^{\mathcal{H}}$ of the low resolution and high resolution areal units respectively, which we briefly describe below with proofs given in Section~S\SMproofs~of the Supplementary Materials.

\begin{prop}
The marginal prior distribution induced by the $nHDP(\alpha_0, \alpha_1, \alpha_2, H_0)$ on the partition $\gamma^{\mathcal{L}}$ of low resolution areal units is the Chinese Restaurant Process:
$$p(\gamma^{\mathcal{L}}) = CRP(\alpha_2). $$
\end{prop}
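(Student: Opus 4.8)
The plan is to reduce the statement to the classical Blackwell--MacQueen (Pólya urn) characterization of the Chinese Restaurant Process. First I would observe that the low-resolution partition $\gamma^{\mathcal{L}}$ is, by definition, completely determined by the pattern of equalities among the group-specific measures $G_1, \dots, G_L$: two units $A_{\ell_1}$ and $A_{\ell_2}$ lie in the same block of $\gamma^{\mathcal{L}}$ exactly when $G_{\ell_1} = G_{\ell_2}$. Since the $G_\ell$ are i.i.d.\ draws from $Q$, and $Q \sim DP(\alpha_2, \mathcal{P})$ where $\mathcal{P}$ denotes the law of the inner hierarchical Dirichlet process (the distribution of $Q^*_j \sim DP(\alpha_1, G_0)$ with $G_0 \sim DP(\alpha_0, H_0)$), the whole problem collapses to a single question: what partition is induced by the ties among $L$ i.i.d.\ samples from a realization of a Dirichlet process?

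Next I would invoke the standard fact that if $G \sim DP(\alpha, H)$ with $H$ non-atomic and $X_1,\dots,X_n \overset{iid}{\sim} G$, then the partition induced by the ties $\{X_i = X_j\}$ is exactly $CRP(\alpha)$. Applying this with $\alpha = \alpha_2$, $G = Q$, $H = \mathcal{P}$, and $X_\ell = G_\ell$ immediately yields $p(\gamma^{\mathcal{L}}) = CRP(\alpha_2)$, \emph{provided} the base measure $\mathcal{P}$ is non-atomic as a distribution on the space of probability measures, i.e.\ that two independent draws $Q^*_j$ and $Q^*_{j'}$ from the inner HDP are almost surely distinct.

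The hard part, and the only step that genuinely differs from the nDP case discussed in the text, is verifying this non-atomicity. In the nDP the atoms $Q^*_j$ are independent DP realizations with the continuous base $H_0$, so their supports are disjoint a.s.\ and distinctness is immediate. In the nHDP the $Q^*_j$ instead share the discrete base $G_0$, so they have the \emph{same} support (the atoms $\theta^*_k$ of $G_0$) and distinctness cannot come from the atom locations. I would argue instead through the weights: conditionally on $G_0$, each $Q^*_j = \sum_k p_{jk}\delta_{\theta^*_k}$ has a weight sequence $(p_{jk})_k$ drawn from an independent stick-breaking (GEM) process, whose law is absolutely continuous, so that $P(Q^*_j = Q^*_{j'} \mid G_0) = P\bigl((p_{jk})_k = (p_{j'k})_k \mid G_0\bigr) = 0$. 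Integrating over $G_0$ preserves this, giving $P(Q^*_j = Q^*_{j'}) = 0$ and hence the non-atomicity of $\mathcal{P}$. With that in hand the CRP conclusion follows, and I would close by noting that the inner parameters $\alpha_0$ and $\alpha_1$ play no role in $\gamma^{\mathcal{L}}$: only $\alpha_2$, the concentration governing the outer stick-breaking of $Q$, survives in the marginal law of the low-resolution partition.
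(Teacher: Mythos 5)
Your overall strategy --- reduce to the P\'olya-urn characterization of i.i.d.\ draws from a DP with a diffuse base measure, with the only real work being the a.s.\ distinctness of the $Q^*_j$, which must come from the weight sequences rather than from the (shared) atom locations --- is exactly the route the paper takes, and your treatment of that distinctness step via the absolute continuity of the conditional law of the weights $(p_{jk})_k$ given $G_0$ is correct and matches the paper's argument.

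The one step you should repair is the opening reduction. Marginally over $G_0$, $Q$ is \emph{not} distributed as $DP(\alpha_2,\mathcal{P})$ with $\mathcal{P}$ the marginal law of the inner HDP: a realization of $DP(\alpha_2,\mathcal{P})$ has atoms drawn i.i.d.\ from $\mathcal{P}$ independently of the weights, whereas the atoms $Q^*_j$ of the nHDP are only conditionally i.i.d.\ given $G_0$ and are dependent marginally (they all share the support points $\theta^*_k$ of the common $G_0$); integrating out the random base measure of a DP generally yields a mixture of DPs, not a DP. The statement you actually need --- and the one your own non-atomicity argument implicitly relies on, since you condition on $G_0$ there --- is the conditional one: given $G_0$, $Q\sim DP\bigl(\alpha_2, DP(\alpha_1,G_0)\bigr)$, the base $DP(\alpha_1,G_0)$ is diffuse on the space of probability measures by your weight argument, so $p(\gamma^{\mathcal{L}}\mid G_0)=CRP(\alpha_2)$; and since this conditional law does not depend on $G_0$, integrating $G_0$ out gives $p(\gamma^{\mathcal{L}})=CRP(\alpha_2)$. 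This is precisely how the paper organizes the proof: it runs the P\'olya urn conditionally on $G_0$ and only afterwards checks that a ``new'' draw remains a.s.\ distinct from the previously observed $G_1,\ldots,G_{\ell-1}$ once $G_0$ is marginalized. With that reordering your argument is complete, and your closing observation that $\alpha_0$ and $\alpha_1$ drop out of the marginal law of $\gamma^{\mathcal{L}}$ is correct.
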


\begin{prop}
The prior distribution induced by the $nHDP(\alpha_0, \alpha_1, \alpha_2, H_0)$ on the partition $\gamma^{\mathcal{H}}$ of the high resolution areal units is a Chinese Restaurant Franchise where the groups are defined by the clusters in the partition $\gamma^{\mathcal{L}}$ of the low resolution areal units:
$$ p(\gamma^{\mathcal{H}} \, \vert \, \gamma^{\mathcal{L}}) = CRF(\gamma^{\mathcal{H}} \, \vert \, \alpha_0, \alpha_1, \gamma^{\mathcal{L}}).$$
\end{prop}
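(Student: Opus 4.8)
The plan is to condition on the low-resolution partition $\gamma^{\mathcal{L}}$ and show that, given this partition, the nHDP reduces exactly to a Hierarchical Dirichlet Process over a reduced collection of groups; the Chinese Restaurant Franchise then follows directly from the correspondence established by \cite{teh2006hierarchical}. Property~1, which we may assume, already tells us that the marginal law of $\gamma^{\mathcal{L}}$ is $CRP(\alpha_2)$, so all that remains is to identify the conditional law of $\gamma^{\mathcal{H}}$.

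First I would record the selection structure underlying the nHDP. Since the $G_\ell$ are i.i.d.\ draws from the discrete measure $Q = \sum_j p_j \delta_{Q^*_j}$, each $G_\ell$ equals some atom $Q^*_{z_\ell}$, where $z_\ell \in \{1, 2, \ldots\}$ is a cluster-assignment variable with $P(z_\ell = j) = p_j$. Two low-resolution units lie in the same block of $\gamma^{\mathcal{L}}$ precisely when they share the same value of $z_\ell$, so $\gamma^{\mathcal{L}}$ is a function of the vector $(z_1, \ldots, z_L)$ alone. The crucial observation is that in the stick-breaking construction the assignment variables $(z_\ell)$ depend only on the weights $(p_j)$ and are independent of the atoms $(Q^*_j)$. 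Consequently, conditioning on $\gamma^{\mathcal{L}}$ conditions only on the $(z_\ell)$ and leaves the conditional law of the atoms unchanged: the atoms indexed by the distinct blocks of $\gamma^{\mathcal{L}}$ remain i.i.d.\ realizations from $DP(\alpha_1, G_0)$ with the single shared base measure $G_0 \sim DP(\alpha_0, H_0)$.

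Next I would reorganize the high-resolution parameters by the blocks of $\gamma^{\mathcal{L}}$. Let the distinct blocks be $B_1, \ldots, B_M$, and for each block $B_m$ let $Q^*_{(m)} \sim DP(\alpha_1, G_0)$ be its common atom. Because every $\ell \in B_m$ has $G_\ell = Q^*_{(m)}$, all the high-resolution parameters $\{\theta_{\ell,h} : \ell \in B_m,\ 1 \le h \le n_\ell\}$ are i.i.d.\ draws from the single distribution $Q^*_{(m)}$. Thus, conditionally on $\gamma^{\mathcal{L}}$, the parameters are grouped into $M$ groups, the $m$-th group pooling the high-resolution parameters of all low-resolution units in $B_m$, with group-specific measures drawn i.i.d.\ from $DP(\alpha_1, G_0)$ and a common $G_0 \sim DP(\alpha_0, H_0)$. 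This is exactly the HDP of \cite{teh2006hierarchical} with $M$ groups determined by $\gamma^{\mathcal{L}}$. Invoking their result that the partition induced by the HDP is a Chinese Restaurant Franchise with concentration parameters $\alpha_0$ and $\alpha_1$ yields $p(\gamma^{\mathcal{H}} \mid \gamma^{\mathcal{L}}) = CRF(\gamma^{\mathcal{H}} \mid \alpha_0, \alpha_1, \gamma^{\mathcal{L}})$, as claimed.

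The main obstacle is the independence argument in the second paragraph: one must verify carefully that conditioning on the co-clustering event $\gamma^{\mathcal{L}}$ does not distort the distribution of the selected atoms. This is where the specific stick-breaking construction matters---the weights $(p_j)$ that drive the assignments are generated independently of the atoms $(Q^*_j)$, so the event $\{\gamma^{\mathcal{L}} = \text{given partition}\}$ is measurable with respect to the $\sigma$-algebra generated by the $(z_\ell)$, which is independent of the atoms. A secondary point requiring care is that the specific indices $j$ selected by the blocks are random, but because the atoms are exchangeable (indeed i.i.d.\ given $G_0$) across $j$, the joint law of the selected atoms does not depend on which indices were chosen, so the reduction to an HDP on exactly $M$ i.i.d.\ groups is legitimate.
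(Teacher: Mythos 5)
Your proposal is correct and follows essentially the same route as the paper's own proof: condition on $\gamma^{\mathcal{L}}$, pool the high-resolution parameters by the blocks of the low-resolution partition, observe that the pooled collections with their common atoms $Q^*_{(m)} \sim DP(\alpha_1, G_0)$ and shared $G_0 \sim DP(\alpha_0, H_0)$ constitute exactly an HDP, and invoke the Chinese Restaurant Franchise correspondence of Teh et al. The only difference is that you explicitly justify, via the independence of the stick-breaking weights from the atoms, why conditioning on the co-clustering event does not distort the law of the selected atoms---a step the paper's proof leaves implicit.
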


Now that we have specified our nested Hierarchical Dirichlet process prior for the underlying crime density parameters, we complete our model by specifying a distribution for our observed crime data.  In particular, we will be modeling the crime density (number of crimes divided by area) in each areal unit.  Within each high resolution areal unit $A_{\ell,h}$, we assume that the observed crime density $y_{\ell,h}$ is normally distributed with mean equal to $\theta_{\ell,h}$ and common variance $\sigma^2$. This is equivalent to model the overall distribution of crime density as a mixture of normals.
Note that $\sigma^2$ represents the variance within each high resolution cluster since since all high resolution areal units within a cluster have the same mean crime density $\theta_{\ell,h}$.  

As described above, we model the underlying crime density parameters $\theta_{\ell,h}$ as a latent discrete mixture with our proposed nested hierarchical Dirichlet process prior that induces partitions $\gamma^{\mathcal{H}}$ and $\gamma^{\mathcal{L}}$ on the high resolution and low resolution areal units respectively.  After rescaling the data, we set the base measure $nHDP(\alpha_0, \alpha_1, \alpha_2, H_0)$ as a normal distribution centered at zero with variance ${k_0}^{-1}\sigma^2$.  We use an inverse-gamma prior distribution for the common variance parameter $\sigma^2$.

Below we provide the full set of distributions in our model for observed crime density $y_{\ell,h}$ in the high resolution areal units $A_{\ell,h}$: 
\begin{align}
\begin{split}
y_{\ell,h} \, \vert \, \theta_{\ell,h} &\sim {\rm Normal} \, (\theta_{\ell,h} \, , \, \sigma^2)\\
\theta_{\ell,h} \, \vert \, G_\ell &\sim G_\ell\\
G_\ell \, \vert \, Q &\sim Q \\
Q &\sim nHDP \, (\alpha_0, \alpha_1, \alpha_2, H_0)\\
H_0 &= {\rm Normal} \, (0 \, , \, \frac{\sigma^2}{k_0})\\
\sigma^2 &\sim {\rm Inv-Gamma} \, (\beta_0 \, , \, \beta_1).
\end{split}
\label{eq:mod1}
\end{align}

We additionally consider truncated normal priors for $\alpha_0, \alpha_1, \alpha_2$.
The hyperparameters $(\beta_0,\beta_1)$ that determine the prior distribution of $\sigma^2$ are chosen to achieve the desired level of data variation within a cluster; the value $k_0$ is specified so that the between-cluster distribution covers the range of the data.
We discuss hyperparameter choices in more detail in Section~\ref{sec:application} as well as Sections~S\SMsimulations~and~S\SMphilly~of our supplementary materials.

\subsection{Posterior inference} \label{MCMCsampler}

We now describe a Markov Chain Monte Carlo sampling scheme for estimating the posterior distribution of the Nested Hierarchical Dirichlet Process mixture model. For simplicity we consider the case where $F$ and $H$ are conjugate distributions, so that the mixture component parameters $\theta_k^*$ can be integrated out. The non-conjugate case can be similarly derived using the ideas in \cite{jain2007splitting}.

Similarly to \cite{zuanetti2018clustering}, we consider a sampling scheme for the marginalized nHDP in which only the latent partitions $\gamma^{\mathcal{L}}$ and $\gamma^{\mathcal{H}}$ are iteratively sampled. 
In a first step we sample the high-resolution partition $\gamma^{\mathcal{H}}$ given the low-resolution partition $\gamma^{\mathcal{L}}$ and the data; this can be carried out with one of the posterior sampling schemes for the HDP, for example those described in \cite{teh2006hierarchical}. 
The second step to sample $\gamma^{\mathcal{L}}$ given $\gamma^{\mathcal{H}}$ requires a more complex procedure: since the 
hierarchical structure
of $\gamma^{\mathcal{H}}$ 
and its dimensionality 
changes when we change $\gamma^{\mathcal{L}}$, this problem falls into the category of models described by \cite{green1995reversible}. However, the reversible-jump framework is difficult to apply in this setting due to the discreteness and complexity of the space of $\gamma^{\mathcal{H}}$. Instead, we use the Metropolis-Hasting algorithm with a proposal that allows for movements in the product space of $(\gamma^{\mathcal{L}}, \gamma^{\mathcal{H}})$.  Specifically, we propose a move for $\gamma^{\mathcal{L}}$ and,  conditional on this proposed value, we propose a value for $\gamma^{\mathcal{H}}$ that is consistent with the new configuration of $\gamma^{\mathcal{L}}$.

\paragraph{Sampling $\gamma^{\mathcal{H}}$.} As highlighted in Proposition~2, the groups of low-resolution areal units $A_\ell$ constituting the hierarchy of high-resolution units $A_{\ell,h}$ is known given the low-resolution partition $\gamma^{\mathcal{L}}$.  Using the Chinese Restaurant Franchise (CRF) metaphor, the division of costumers (high resolution areal units) into restaurants (clusters of the low-resolution units) is known and fixed.  
Thus, given $\gamma^{\mathcal{L}}$, we can sample the high-resolution partition using standard CRF sampling schemes. We represent $\gamma^{\mathcal{H}}$ with the assignment of costumers to \textit{tables} within each restaurant and the assignment of tables to \textit{dishes} across restaurants \citep{teh2006hierarchical}.   Rather than exploring the partition space using local Gibbs-type updates, we modify the partitions with broader `split-merge' moves that improve mixing \citep{jain2004split}. We extend the algorithm of \cite{wang2012split} to perform split-merge moves not only in the update of the partition of costumers into tables but also in sampling the dishes for those tables.  See Section~S\SMsplitmerge~in our supplementary materials for additional details.

\paragraph{Sampling $\gamma^{\mathcal{L}}$.} 

Since the clusters of $\gamma^{\mathcal{L}}$ define the division of customers into restaurants, particular caution should be used when sampling from the conditional posterior of $\gamma^{\mathcal{L}}$ given $\gamma^{\mathcal{H}}$. In fact, when the restaurants structure is changed because the clusters of $\gamma^{\mathcal{L}}$ are, tables need to be rearranged and so do the dishes associated with them.
Moreover for each value of $\gamma^{\mathcal{H}}$, the partition of groups $\gamma^{\mathcal{L}}$ is uniquely identified, meaning that the conditional posterior $p(\gamma^{\mathcal{L}} \vert \gamma^{\mathcal{H}}, \x)$ is a point mass.  In acknowledgement of these issues, we use a Metropolis-Hasting step that proposes a new configuration for the pair $(\gamma^{\mathcal{L}},\gamma^{\mathcal{H}})$.
The proposal distribution is broken down into an initial step proposing a new configuration for $\gamma^{\mathcal{L}}$ and a second step proposing a value for $\gamma^{\mathcal{H}}$ given the sampled configuration of $\gamma^{\mathcal{L}}$.
The first step consists in a split-merge move similar to \cite{jain2004split}'s proposal, while the second step comprises simple rearrangements of the tables in the new restaurant configuration, using the CRF representation with tables and dish assignments.
Further details are provided in Section~S\SMsmnHDP~of our supplementary materials.

The multi-level split-merge algorithm described here has been integrated with the parallel tempering \citep{geyer1991markov} to alleviate mixing problems that are often common in multimodal distributions such as this.
The code has been implemented in C++ and R, and is available online at 
\if1\blind
{
\url{https://github.com/cecilia-balocchi/multiresolution_clustering}.
} \fi
\if0\blind
{
 \url{https://anonymous.4open.science/r/multiresolution_clustering-56D4} 
 (link and repository have been anonymized for double-blindness).
  } \fi


\section{Synthetic Data Evaluation} \label{sec:simulation}

In this section, we evaluate the performance of our nested hierarchical Dirichlet Process model by comparing to other methods in synthetic data generated under various conditions.  

We first consider a situation where the synthetic data is generated from one of six different normal mixtures $F_j$.  Each of these normal mixtures $F_j (y) = \sum_{k=1}^6 w_{jk} \phi ((y - \mu_k)/\sigma)$ are the convolution of Gaussian kernels around the same six equally spaced means $(\mu_1,\ldots,\mu_6)$ but with different weights $\w_j$ on each Gaussian component.   These six different normal mixtures are displayed in Figure~\ref{fig:framework2}.

\begin{figure}[H]
\centering
\includegraphics[width = \textwidth]{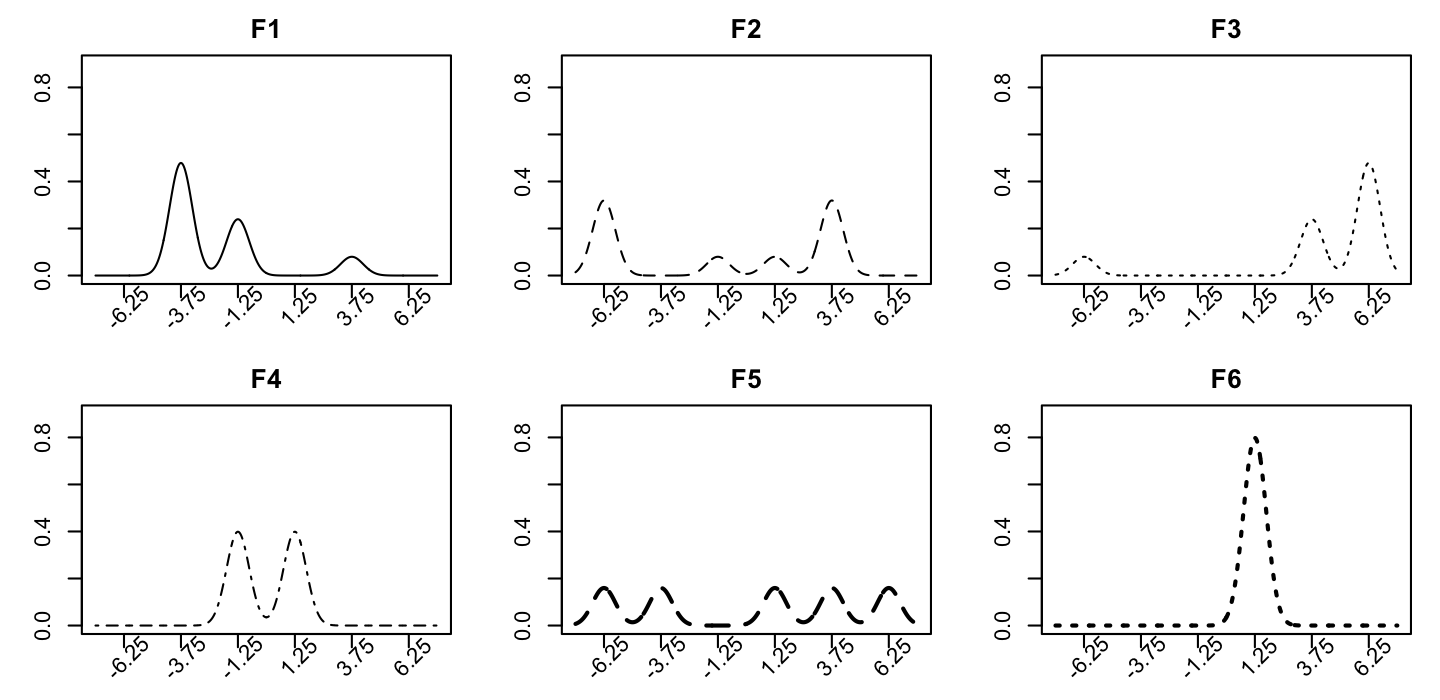} 
\caption{Normal mixtures underlying our synthetic data generation. \label{fig:framework2}}
\end{figure}

We generate datasets that have $L = 25$ low resolution units, where each of these units are uniformly assigned to one of those six normal mixtures.  The true low resolution partition $\gamma^{\mathcal{L}}$ corresponds to the index $j$ of the normal mixture $F_j$ that was assigned to each low resolution unit.  Each of these low resolution units contains the same number $n_\ell$ of high resolution units.  We consider two settings for the number of high resolution units, $n_\ell = 10$ and $n_\ell = 50$, which give a total of $250$ and $1250$ high resolution units across the $L = 25$ low resolution units, respectively.

Within each low resolution unit, each of the $n_\ell$ high resolution units is sampled from the normal mixture $F_j$ that was assigned to that low resolution unit.   The cluster assignment for each high resolution unit is one of the six normal components in $F_j$, sampled with probabilities proportional to the weights $\w_j$ in $F_j$.   So the true high resolution partition $\gamma^{\mathcal{H}}$ is the set of indices corresponding to the normal components sampled for each high resolution unit.  Section~S\SMsimmixt~of the Supplementary Materials provides further details about the synthetic data generation.

We compare the behavior of several methods that are able to simultaneously estimate the two partitions $\gamma^{\mathcal{L}}$ and $\gamma^{\mathcal{H}}$: our proposed nested hierarchical Dirichlet process model (nHDP), the nested Dirichlet process (nDP), and an adaptation of K-means for multi-level clustering.  Specifically, we apply K-means algorithm first to the high resolution data, while using the silhouette method \citep{rousseeuw1987silhouettes} to choose the number of clusters.  The partition created by this clustering of high resolution units is used to create a vector of high resolution cluster proportions within each low resolution unit.  We then run the K-means algorithm (again with the silhouette method) on these vectors of cluster proportions to create a partition of the low resolution units.     

We use the nDP implementation of \cite{zuanetti2018clustering}. For both the nHDP and nDP, we ran two MCMC chains for twelve thousand iterations, with convergence determined to occur after the two thousand iterations which were then discarded as burn-in. 

We evaluate the performance of each method in terms of both parameter estimation as well as recovery of the true partitions $\gamma^{\mathcal{L}}$ and $\gamma^{\mathcal{H}}$ underlying both resolutions of units.   

Our measure of parameter estimation at the high resolution level is the root mean squared error (RSME) of the estimated means, $\mu_k$, for each normal component with this estimation being performed seperately within each low resolution unit.    Our measure of parameter estimation at the low resolution level is the RMSE of the estimated overall mean of the normal mixture, $\E(F_j)$, that was assigned to each low resolution unit.  For the nHDP and nDP models, we use posterior mean estimates of these quantities, whereas for our adaptation of K-means, we use mean estimates computed conditionally on the estimated partition.

For our measures of partition recovery, we use the Variation of Information (VI) distance \citep{meilua2007comparing} between the true partition and the estimated partition at each resolution level.  For the MCMC-based nHDP and nDP models, the estimated partitions are found as the minimizers of the VI distance between the sampled partitions as recommended by \cite{wade2018bayesian}.

We compare the performance of the nHDP and nDP models as well as our multi-level adaptation of K-means (km) on our measures of parameter estimation and partition recovery in Figure~\ref{fig:sim}.   Separate rows of performance are shown for data generated with either $n_\ell = 10$ and $n_\ell = 50$ high resolution units within each of our $L = 25$ low resolution units.   In Section~S\SMsimmixt~of our supplementary materials, we see similar results for datasets generated with either smaller ($L = 10$) or larger ($L = 50$) numbers of low resolution units.

\begin{figure}[H]
\centering
\includegraphics[width = \textwidth]{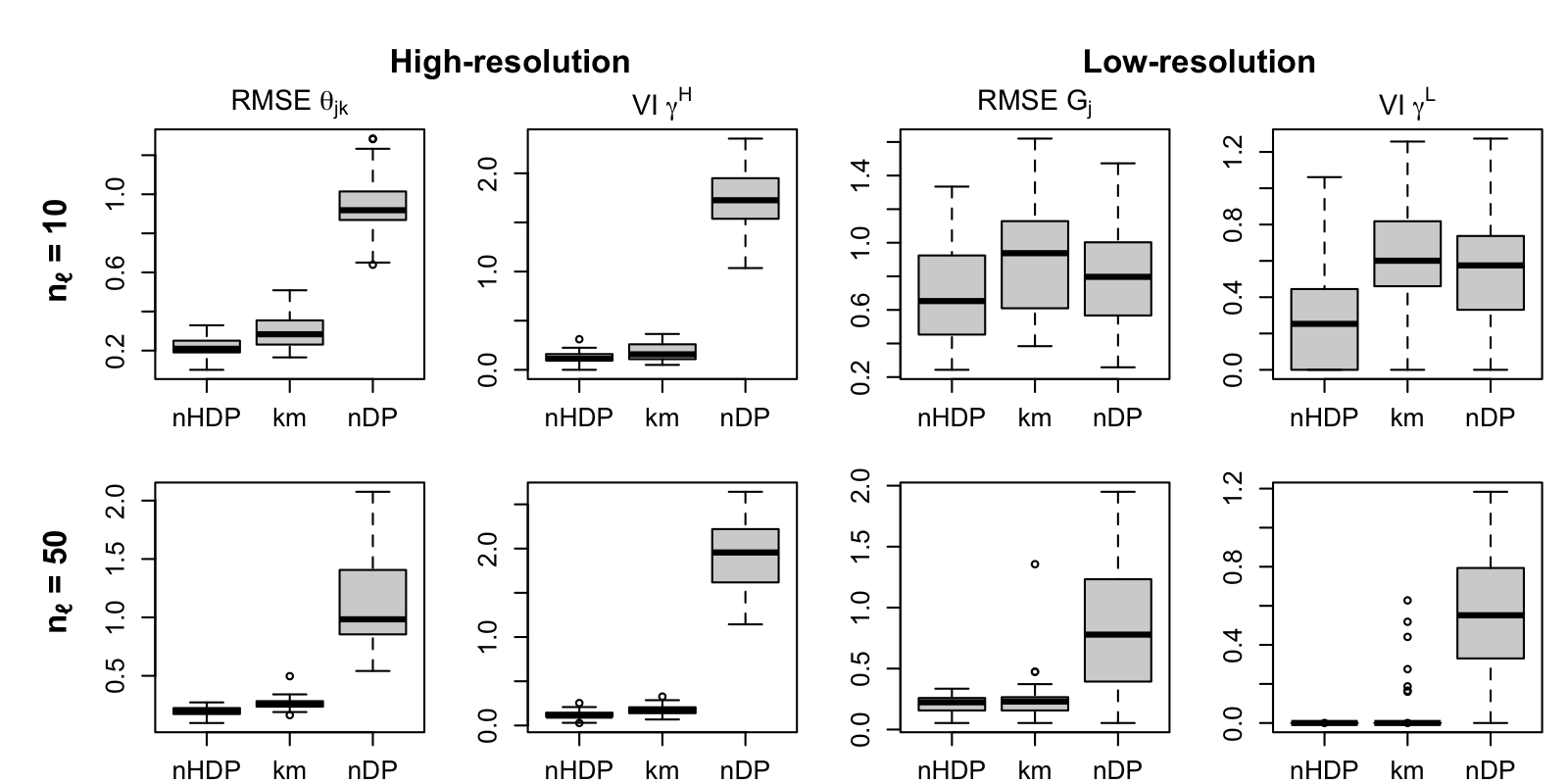} 
\caption{Performance of the nHDP, nDP, and K-means (km) models on our measures of parameter estimation and partition recovery at the high resolution level (left panels) and low resolution level (right panels).  Each dataset was created with $L = 25$ low resolution units that each contain either $n_\ell = 10$ high resolution units (top panels) or $n_\ell = 50$ high resolution units (bottom panels). \label{fig:sim}}
\end{figure}

From Figure~\ref{fig:sim}, we see that our proposed nested hierarchical Dirichlet Process (nHDP) achieves the best performance among the three methods in terms of both parameter estimation and partition recovery.   We also see that the relative performance of the different methods is similar between the tasks of parameter estimate versus partition recovery.  This not surprising in that recovering a partition closer to the truth should allow for better parameter estimation.  However, an estimated partition with too many clusters could overfit the data and still have decent parameter estimation while being a poor estimate of the true partition itself. 

Our adaptation of the K-means algorithm (km) is competitive with the nHDP in several of these data settings but exhibits poor parameter estimation and partition recovery at the low resolution level when there are a smaller ($n_\ell = 10$) number of high resolution units.    An important advantage of our nHDP approach is superior performance in difficult but common data settings in which there are a limited number of high resolution units within each low resolution unit.  

In contrast, the nested Dirichlet process (nDP) is not competitive and shows substantially worse performance across both levels of resolution and regardless of the number of high resolution units per low resolution units.   The nDP method suffers from being restricted to only estimating nested partitions in this data situation where the true partitions are not nested.

In Section~S\SMsimmodel~of the supplementary materials, we see similar results in an additional simulation study where synthetic data was generated from our model \eqref{eq:mod1} which produces less uniform true partitions compared to our data generation above.  

\section{Clustering crime in West Philadelphia}\label{sec:application}

Crime data in Philadelphia are made publicly available by the Philadelphia Police department.  Information is provided on the date, time and GPS location for every reported crime from 2006 to 2018, as well as the crime type.  In this analysis, we focus on only violent crimes which consist of homicides, rapes, robberies and aggravated assaults, according to the definition by the Uniform Crime Reporting program of the FBI.  We calculate the number of violent crimes per year within each US census block group (our high resolution units) and each US census tract (our low resolution units).  We then average the count of violent crimes per year within each areal unit over the 2006-18 time period.  

However, average violent crime counts are difficult to compare directly between the two levels since low resolution counts are on different scale from high resolution counts.  Thus, we convert these counts into a {\it rate} of violent crimes that are more comparable between US census block groups and US census tracts.  While crime rates by residential population are commonly used by criminologists, it has been argued that crimes are often committed by (and against) individuals that do not reside in that particular neighborhood \citep{zhang2007spatial}.   So instead we will focus on violent crime {\it density} per unit area, as suggested by \citet{zhang2007spatial}.

We will focus our analysis on the West Philadelphia region for which we display both raw crime densities and clustering results in Figure~\ref{fig:wp_real_nHDP}.    This part of the city is home to several universities and has large heterogeneity in observed crime densities between different areal units at both levels of resolution.  

\begin{figure}[H]
\centering
\includegraphics[width = 0.95\textwidth]{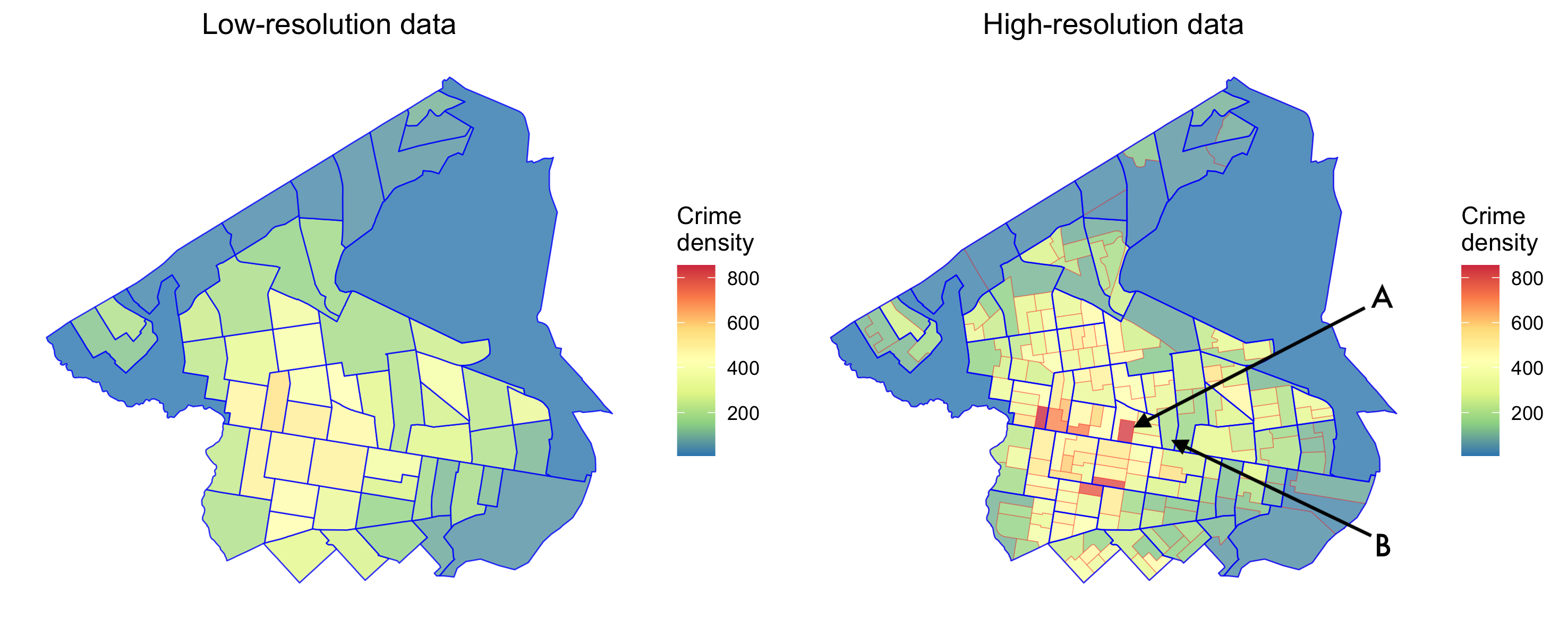} \\
\includegraphics[width = 0.95\textwidth]{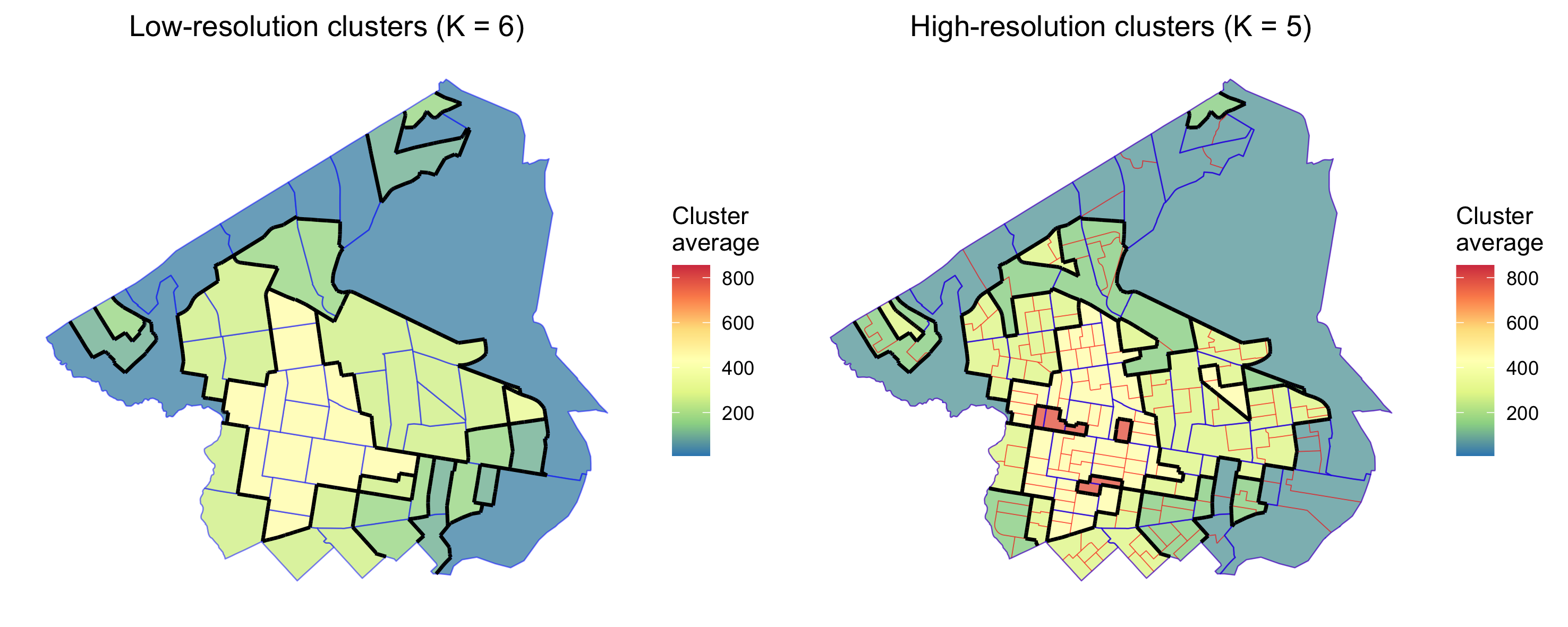} 
\caption{
{\it Top panels}: West Philadelphia divided into 56 census tracts delineated in blue (left) and 202 census block groups delineated in red (right). Each area is colored by the average violent crime density per year (2006-18). 
{\it Bottom panels}: Estimated low resolution (left) and high-resolution (right) partitions from our nHDP model.  Each cluster is colored by the average of the average violent crime density per year within that cluster. }
\label{fig:wp_real_nHDP}
\end{figure}

For example, we compare the US census block group at 52nd St and Market St (labeled A in Figure~\ref{fig:wp_real_nHDP}) which has one of the largest crime densities at 822 violent crimes per squared kilometer versus the US census block group at 48th St and Market St (labeled B in Figure~\ref{fig:wp_real_nHDP}) that is only half a mile away but has one of the lower crime densities at 208 crimes per squared kilometer.  The built environment plays a substantial role in this heterogeneity.  The high crime block group at 52nd St and Market St is the location of a major metro station whereas the low crime block group at 48th St and Market St is the location of a high school and hospital building.   

Examining the raw crime densities in the top panels of Figure~\ref{fig:wp_real_nHDP}), it is important to note that the high crime areas seen in red at the center of the high resolution map would not be detected if only the low resolution data was analyzed.    These maps also show that the patterns in crime levels amongst the high resolution units are not contained within units at the low resolution level, which makes the restriction to nested partitions (that is characteristic of the nested Dirichlet process) ill-suited for this data situation.  

So we focus our analysis of this region on the low and high resolution partitions that we estimate from our nested hierarchical Dirichlet process (nHDP) model, after rescaling the violent crime densities to be centered at zero and have unit variance.   We used a truncated normal prior with mean 2 and standard deviation 1 for $\alpha_0, \alpha_1$ and $\alpha_2$.  We set the values of the hyperparameters $\beta_0$ and $\beta_1$ so that $\sigma^2$ had a prior mean of 0.25 and a prior standard deviation of 0.1.   These values express our prior belief that each cluster covers a range of up to 1 data standard deviations with a within-cluster variation of approximately 0.5 data standard deviations.  Finally, we set $k_0 = 1/10$ to ensure that the base distribution $H_0$ covers the full range of the data.  

We ran two chains the MCMC sampler described in Section~\ref{MCMCsampler} for 50000 iterations, and then removed the first 10000 iterations of each chain as burn-in and thinning the remaining samples to only retain 1 out of every 50 iterations.  The remaining 800 samples per chain were combined and then we  extracted the best high and low resolution partitions from these samples using the method of \cite{wade2018bayesian}.   In Section~S\SMphilly~of the supplementary materials, we provide additional details on our hyperparameter choices and partition estimation from the MCMC samples.

Maps of the best high and low resolution partitions estimated from our nested hierarchical Dirichlet process model are shown in the bottom panels of Figure~\ref{fig:wp_real_nHDP}.   Each cluster is highlighted with thick black borders with the internal color corresponding to the mean violent crime density within that cluster.   The partition of high resolution units (US census block groups) consists of 5 clusters whereas the partition of low resolution units (US census tracts) consists of 6 clusters.  

Overall, the estimated partitions in the bottom panels of Figure~\ref{fig:wp_real_nHDP} give a good visual approximation of the raw crime densities (shown in the top panels of Figure~\ref{fig:wp_real_nHDP}) while offering a simpler and easier to interpret map of violent crime levels in West Philadelphia.  We also note that the clusters of US census block groups are not nested within the clusters of US census tracts which would be a restriction imposed by the nested Dirichlet process.  

We see a strong correspondence between the low and high resolution partitions in the periphery regions of West Philadelphia that have low or medium violent crime densities (blue and darker green).  However, only the map of the high resolution partition enables us to distinguish between the locations with medium (yellow) vs. high (red) violent crime densities in the central region of West Philadelphia.    The red locations with the highest violent crime densities are small and scattered enough to not be detected at all in the low resolution partition or low resolution raw crime density maps in Figure~\ref{fig:wp_real_nHDP}.  

In summary, we see that the partition of low resolution US census tracts are an adequate summary of the low to medium violent crime densities found in the peripheral regions of West Philadelphia.  However, we see an increase in violent crime density towards the center of West Philadelphia, where the cluster of US census block groups with highest violent crime densities are only discovered in the high resolution partition estimated by our model.   

\section{Discussion} \label{sec:discussion}

Estimation of the spatial variation in crime in large cities is a challenging endeavor as patterns in crime density are not necessarily smooth due to physical and social boundaries within urban environments, with the additional complication that different resolutions of data aggregation are available these modeling efforts.   We have addressed these issues by developing a nested hierarchical Dirichlet process (nHDP) model that clusters areal units across multiple levels of resolution simultaneously.   Our approach is more flexible than the popular nested Dirichlet Process model of \citep{rodriguez2008nested} in the sense that the nHDP is not restricted to estimating only nested partitions between different levels of resolution. 

We apply our nHDP approach to the estimation of violent crime density in the city of Philadelphia, with a focus in Section~\ref{sec:application} on the West Philadelphia region which has substantial heterogeneity in violent crime incidence.   We simultaneously estimate partitions of this region at both the lower resolution of US census tracts and the higher resolution of US census block groups.    We find a high similarity between the partitions of US census tracts and US census block groups in the peripheral neighborhoods of West Philadelphia that have the lowest levels of violent crime density.   In the more central area of this region, only the high resolution partition estimated by our model is able to detect a cluster of US census block groups with the highest violent crime densities in West Philadelphia.   These locations are too small and not proximal enough to each other to be found in the partition of the lower resolution US census tracts.  

Our Markov Chain Monte Carlo model implementation involves split-merge moves for updating the partition of lower resolution units.  While split-merge algorithms are known for having good mixing properties, it would be convenient in this high dimensional setting to develop a more direct Gibbs sampling step for updating the low resolution partition.   While the conditional probabilities needed for a Gibbs step cannot be analytically computed in a simple way, it may be possible to use numerical techniques that could lead to more efficient sampling from the posterior distribution over partitions.  That said, even an efficient MCMC algorithm can be limited in high dimensional problems, especially as the number of units or number of resolution levels increase.   Alternative models that offer better scalability should be an object of future research.  The Bayesian Additive Regression Trees \citep{chipman2010bart} model is a promising option in this direction.  

While our nHDP model was developed to address the specific challenges of crime estimation in urban environments, there are many other domains that would benefit from our approach to multi-resolution modeling.    The issue of multiple levels of resolution is ubiquitous whenever data is being aggregated within areal units which is a very common situation in many fields such as epidemiology, ecology and neuroscience.   For example, in the domain of neuroimaging, a multi-resolution clustering model could be used to find similarities between larger brain regions while also detecting finer patterns of behavior among individual voxels or sets of voxels.

\section{Acknowledgements} 

The authors are grateful to James E. Johndrow for helpful comments and suggestions. The first author was supported by the European Research Council (ERC) under the European Union’s Horizon 2020 research and innovation programme under grant agreement No. 817257.  The third author gratefully acknowledges funding from NSF grant DMS-1916245.

\bibliographystyle{apalike}
\bibliography{references}


\newpage
\begin{center}
{\Large {\bf Supplementary Materials for }}

\bigskip

{\Large {\bf ``Clustering Areal Units at Multiple Levels of Resolution to Model Crime in Philadelphia"}}

\bigskip

\end{center}

\setcounter{equation}{0}
\setcounter{figure}{0}
\setcounter{table}{0}
\setcounter{page}{1}
\setcounter{section}{0}
\setcounter{prop}{0}
\makeatletter
\renewcommand{\theequation}{S\arabic{equation}}
\renewcommand{\thefigure}{S\arabic{figure}}
\renewcommand{\bibnumfmt}[1]{[S#1]}
\renewcommand{\citenumfont}[1]{S#1}
\renewcommand\thesection{S\arabic{section}}

\section{Properties of the nested Hierarchical Dirichlet Process}
\label{app:proofs}

\begin{prop}
The marginal prior distribution induced by the $nHDP(\alpha_0, \alpha_1, \alpha_2, H_0)$ on the partition of groups $\gamma^{\mathcal{L}}$ is the Chinese Restaurant Process:
$$p(\gamma^{\mathcal{L}}) = CRP(\alpha_2). $$
\end{prop}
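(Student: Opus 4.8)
The plan is to recognize that, conditionally on the shared base measure $G_0$, the discrete measure $Q$ is itself a Dirichlet process realization with concentration $\alpha_2$; to marginalize it out to obtain a P\'olya urn scheme for the group-specific measures $G_\ell$; to read off the Chinese Restaurant Process from the resulting predictive rule; and finally to remove the conditioning on $G_0$.

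First I would identify the conditional law of $Q$. By construction its weights satisfy $p_j = b_j \prod_{l=1}^{j-1}(1-b_l)$ with $b_l \overset{iid}{\sim} {\rm Beta}(1,\alpha_2)$, which is exactly the stick-breaking representation of a Dirichlet process with concentration $\alpha_2$. Given $G_0$, the atoms are independent and identically distributed, $Q^*_j \overset{iid}{\sim} DP(\alpha_1, G_0)$, so $Q \mid G_0 \sim DP(\alpha_2, P_{G_0})$, where $P_{G_0}$ denotes the law on the space of probability measures of a single draw from $DP(\alpha_1, G_0)$. Since $(G_\ell)_{\ell=1}^L \overset{iid}{\sim} Q$, marginalizing out $Q$ (still conditioning on $G_0$) yields the Blackwell--MacQueen P\'olya urn: $G_1 \sim P_{G_0}$ and, for $\ell \geq 1$,
$$G_{\ell+1} \mid G_1, \ldots, G_\ell, G_0 \sim \frac{\alpha_2}{\alpha_2 + \ell}\, P_{G_0} + \frac{1}{\alpha_2 + \ell} \sum_{i=1}^{\ell} \delta_{G_i}.$$
This predictive rule is precisely the CRP seating mechanism: the $(\ell+1)$-th low resolution unit either opens a new cluster, drawing a fresh measure from $P_{G_0}$, with probability proportional to $\alpha_2$, or joins an existing cluster with probability proportional to its current size.

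The crux is to verify that $P_{G_0}$ is diffuse, i.e. that two independent draws from $DP(\alpha_1, G_0)$ coincide with probability zero, so that the ``new cluster'' events never accidentally produce equal measures and $\gamma^{\mathcal{L}}$ is governed solely by the urn dynamics. This requires more care than the ordinary DP--CRP correspondence, because every atom $Q^*_j$ shares the same support points $\theta^*_k$ inherited from the discrete $G_0$, so diffuseness cannot come from distinct locations. Instead I would argue that two independent draws $Q^*_j, Q^*_{j'} \sim DP(\alpha_1, G_0)$ differ almost surely because their weight sequences $(p_{jk})_k$ are continuous random variables generated by independent stick-breaking, whence $P(Q^*_j = Q^*_{j'} \mid G_0) = 0$. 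The conditional partition law is therefore $CRP(\alpha_2)$ for almost every realization of $G_0$; since this expression does not depend on $G_0$, integrating over $G_0$ leaves it unchanged and gives $p(\gamma^{\mathcal{L}}) = CRP(\alpha_2)$.

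I expect the main obstacle to be exactly this diffuseness step: one must confirm that distinctness of the atoms $Q^*_j$ is inherited from the almost sure inequality of their weight vectors rather than their shared locations, and that conditioning on the common base measure $G_0$ introduces no positive-probability ties among the $G_\ell$ beyond those dictated by the urn. Once diffuseness is established, the identification of the P\'olya urn predictive rule with the CRP seating rule delivers the exchangeable partition probability function of $CRP(\alpha_2)$ and completes the argument.
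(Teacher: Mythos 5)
Your proposal is correct and follows essentially the same route as the paper's proof: conditioning on $G_0$ to identify $Q$ as a $DP(\alpha_2,\,\cdot\,)$ with base measure the law of $DP(\alpha_1, G_0)$, invoking the P\'olya urn, establishing diffuseness of that base measure via the almost-sure distinctness of the stick-breaking weight sequences, and then integrating out $G_0$ by noting the conditional partition law does not depend on it. The only cosmetic difference is that the paper phrases the final marginalization as $p(Q^* = G_i \mid G_1,\ldots,G_{i-1}) = \int p(Q^* = G_i \mid G_0)\, p(dG_0 \mid G_1,\ldots,G_{i-1}) = 0$, which is the same argument you give.
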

\begin{proof} 
Conditional on $G_0$, $Q$ is a realization of a Dirichlet Process with concentration parameter $\alpha_2$ and with base measure equal to a $DP(\alpha_1, G_0)$; this means that $Q$'s atoms $Q^*_j$, despite sharing the same atoms $\theta^*_k$, are all different {\it a.s.}, because their sequences of weights $(p_{jk})$ are all different {\it a.s.}. Moreover $Q$'s weights $(w_k)$ are generated according to the stick-breaking construction with parameter $\alpha_2$. Thus, conditional on $G_0$, the conditional distribution of $G_\ell$ given $G_1, \ldots, G_{\ell-1}$ follows the P\'olya Urn scheme: 
$G_\ell \vert G_1, \ldots, G_{\ell-1}, G_0 \sim \frac{1}{\ell-1+\alpha_2}\sum_{i=1}^{\ell-1} \delta_{G_i} + \frac{\alpha_2}{\ell-1+\alpha_2} DP(\alpha_1, G_0).$
From this follows that $p(\gamma^{\mathcal{L}} \vert G_0) = CRP(\alpha_2)$. To find the marginal distribution of $\gamma^{\mathcal{L}}$ we need to integrate out $G_0$; for this purpose, note that $G_0$ only affects the distribution of a new observation, i.e. $DP(\alpha_1, G_0)$. Then we just need to show that, marginally on $G_0$, a new observation $Q^*$ is different from the previously observed $G_1, \ldots, G_{\ell-1}$. Note that for any $G_i$, $p(Q^* = G_i \vert G_0) = 0$, because the Dirichlet Process is a non-atomic distribution on the space of probability measures. Since this is true for any $G_0$, $p(Q^* = G_i \vert G_1, \ldots, G_{i-1}) = \int p(Q^* = G_i \vert G_0) p(dG_0 \vert G_1, \ldots, G_{i-1}) = 0$.
\end{proof}

\begin{prop}
The prior distribution induced by the $nHDP(\alpha_0, \alpha_1, \alpha_2, H_0)$ on the partition of observations $\gamma^{\mathcal{H}}$ conditional on the partition of groups $\gamma^{\mathcal{L}}$ is a Chinese Restaurant Franchise distribution, where the groups are defined by the clusters of $\gamma^{\mathcal{L}}$:
$$ p(\gamma^{\mathcal{H}} \vert \gamma^{\mathcal{L}}) = CRF(\gamma^{\mathcal{H}} \vert \alpha_0, \alpha_1, \gamma^{\mathcal{L}})
$$
\end{prop}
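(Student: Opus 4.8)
The plan is to reduce the conditional law of $\gamma^{\mathcal{H}}$ given $\gamma^{\mathcal{L}}$ to the partition law of an ordinary Hierarchical Dirichlet Process whose groups are precisely the clusters of $\gamma^{\mathcal{L}}$, and then to invoke the already-established fact \citep{teh2006hierarchical} that an HDP induces a Chinese Restaurant Franchise on its observations.

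First I would note that the low-resolution partition $\gamma^{\mathcal{L}}$ is a function only of the \emph{allocation} of the groups $G_1, \ldots, G_L$ to the distinct atoms of $Q$: two indices $\ell_1, \ell_2$ lie in the same block of $\gamma^{\mathcal{L}}$ exactly when $G_{\ell_1} = G_{\ell_2} = Q^*_j$ for a common $j$. In the stick-breaking construction of $Q \sim DP(\alpha_2, DP(\alpha_1, G_0))$, the weight sequence $(p_j)$ and the atom sequence $(Q^*_j)$ are independent, and the allocation variables determining $\gamma^{\mathcal{L}}$ depend on the $(p_j)$ and on the i.i.d. draws $G_\ell \sim Q$, but not on the identities of the atoms $Q^*_j$. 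Consequently, conditioning on $\gamma^{\mathcal{L}}$ leaves the conditional law of the occupied atoms unchanged: writing $K = \lvert \gamma^{\mathcal{L}} \rvert$, the $K$ realized atoms $Q^*_{(1)}, \ldots, Q^*_{(K)}$ attached to the occupied blocks remain conditionally i.i.d. draws from $DP(\alpha_1, G_0)$ with $G_0 \sim DP(\alpha_0, H_0)$.

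Next I would identify the resulting structure as an HDP. Within a block $c$ of $\gamma^{\mathcal{L}}$, every high-resolution parameter $\theta_{\ell,h}$ with $\ell$ in $c$ is an i.i.d. draw from the single shared measure $Q^*_{(c)} \sim DP(\alpha_1, G_0)$. This is exactly the generative description of an HDP in which the role of the ``groups'' is played by the blocks of $\gamma^{\mathcal{L}}$, the group-level concentration is $\alpha_1$, the top-level concentration is $\alpha_0$, and the continuous base measure is $H_0$; the number of customers in group $c$ is simply the total count of high-resolution units pooled over all low-resolution units assigned to $c$, so the CRF is well defined for these group sizes. Since the clustering of the $\theta_{\ell,h}$ by shared atom $\theta^*_k$ is by construction the partition $\gamma^{\mathcal{H}}$, transferring the known HDP partition law yields $p(\gamma^{\mathcal{H}} \mid \gamma^{\mathcal{L}}) = CRF(\gamma^{\mathcal{H}} \mid \alpha_0, \alpha_1, \gamma^{\mathcal{L}})$.

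The step I expect to be the main obstacle is the middle one: justifying rigorously that conditioning on $\gamma^{\mathcal{L}}$ does not perturb the conditional law of the atoms $Q^*_j$. This rests on the weight–location independence in the Dirichlet process stick-breaking representation together with the measurability of $\gamma^{\mathcal{L}}$ with respect to the weights and allocation variables alone. I would also need to check that the base measure $DP(\alpha_1, G_0)$ is nonatomic as a law on probability measures, so that ``occupied block'' and ``distinct atom'' coincide and the HDP groups $Q^*_{(1)}, \ldots, Q^*_{(K)}$ are unambiguously defined; this holds because, as already used in the proof of Proposition~1, two draws from $DP(\alpha_1, G_0)$ differ almost surely through their weight sequences even though they share the atoms $\theta^*_k$.
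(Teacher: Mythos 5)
Your argument is correct and follows essentially the same route as the paper's proof: condition on $\gamma^{\mathcal{L}}$, observe that within each block the parameters $\theta_{\ell,h}$ are i.i.d. draws from a single block-specific measure distributed as $DP(\alpha_1, G_0)$ with $G_0 \sim DP(\alpha_0, H_0)$, recognize this as an HDP whose groups are the blocks of $\gamma^{\mathcal{L}}$, and invoke the Chinese Restaurant Franchise characterization of \cite{teh2006hierarchical}. Your treatment is in fact somewhat more careful than the paper's on the one delicate point — that conditioning on $\gamma^{\mathcal{L}}$, being measurable with respect to the stick-breaking weights and allocation variables alone, does not disturb the i.i.d.\ $DP(\alpha_1, G_0)$ law of the occupied atoms — which the paper asserts without comment.
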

\begin{proof}
Let $\gamma^{\mathcal{L}}$ be the partition of low resolution units, where each cluster is formed by the low resolution units associated to the same atom of $Q$, for example let $S^{\mathcal{L}}_k = \{ A_j : G_j = Q^*_k\}$. 
Then the observations corresponding to the low resolution units in the same cluster $S^{\mathcal{L}}_k$ share the same distribution. If we consider the vector of observations from the groups belonging to cluster $S^{\mathcal{L}}_k$, $\Btheta_{S^{\mathcal{L}}_k} = \{ \theta_{\ell,1}, \ldots, \theta_{\ell,n_\ell}: A_\ell \in S^{\mathcal{L}}_k \}$, 
then $\Btheta_{S^{\mathcal{L}}_k} \vert Q^*_k \overset{iid}{\sim} Q^*_k$, with $G^*_k \vert G_0 \sim DP(\alpha_1.G_0)$ for all $k$ and $G_0 \sim DP(\alpha_0, H_0)$.
Thus, conditional on $\gamma^{\mathcal{L}}$, we can divide the $\theta_{\ell,h}$ into the collections defined by the clusters $S^{\mathcal{L}}_k$ and they are distributed according to a Hierarchical Dirichlet Process. 
We can then define $\gamma^{\mathcal{H}}$ by considering the $\theta_{\ell,h}$ that take on the same values across these collections; the distribution of $\gamma^{\mathcal{H}}$ is then described by the Chinese Restaurant Franchise with groups defined by the clusters of $\gamma^{\mathcal{L}}$.
\end{proof}

\section{Algorithm for posterior sampling}
\label{app:smHDP}

\subsection{Split-merge for HDP}
In this section we are going to present a posterior sampling algorithm for the Hierarchical Dirichlet Process. We use the Chinese restaurant franchise representation, described by \cite{teh2006hierarchical}. Instead of using the Gibbs sampling algorithm described by \cite{teh2006hierarchical}, we propose a Split-Merge algorithm for the HDP, extending the work of \cite{jain2004split}.
In the Chinese Restaurant Franchise representation of the HDP, the partition is described by a partition of costumers into tables within each restaurant and a partition of tables into dishes across restaurants.
Let $t_{ji}$ be the table assigned to costumer $i$ in restaurant $j$, with $\bt_j = (t_{ji}: \forall i)$ being the partition of costumers into tables in restaurant $j$, $\bt = (t_{ji}: \forall j,i)$ and $\bt_{-j} = (t_{j'i}: \forall i, j'\neq j)$; moreover let $k_{jt}$ be the dish assigned to table $t$ in restaurant $j$, and $\bk = (k_{jt}:\forall j,t)$ be the partition of tables across restaurants into dishes.

Remember that we can write the likelihood as 
$$p(\y \vert \bt, \bk) = \prod_{k} \int \prod_{j,i: k_{jt_{ji}} = k} p(y_{ji} \vert \phi_k) d\phi_k$$
and that the prior $p(\bk, \bt) = p(\bk \vert \bt) \prod_j p(\bt_j)$, where $p(\bk \vert \bt)$ and $p(\bt_j)$ are Ewens-Pitman prior distributions for partitions. 

Moreover remember that for every Metropolis-Hasting proposal $\phi^*$, we need to compute the acceptance probability $A(\phi^*; \phi)$ to move from partition $\phi$ to $\phi^*$: $A(\phi^*; \phi) = 1\, \wedge \, a(\phi^*; \phi)$, where $a(\phi^*; \phi) = \frac{\pi(\phi^*)q(\phi; \phi^*)}{\pi(\phi)q(\phi^*;\phi)} $, $q(\phi^*; \phi)$ the probability of proposing $\phi^*$ from $\phi$ and $\pi$ is the posterior distribution.

\paragraph{Sampling $\bt$} We iteratively sample the partitions $\bt_j$ for all $j$, given $\bk$ and $\bt_{-j}$. Two costumers $i_1$ and $i_2$ in restaurant $j$ are randomly picked and if they belong to the same cluster ($t_{ji_1} = t_{ji_2}$) a split move is performed, otherwise a merge move is implemented.
\begin{itemize}
\item {\bf Split} When a split move is performed, we need to sample the new table assignment of the elements in the same cluster as $i_1$ and $i_2$. This is done similarly as \cite{jain2004split}'s restricted Gibbs sampling proposal. Moreover, since a new table $t_{new}$ is created, a new dish $k_{jt_{new}}$ is sampled (uniformly among the existing dishes and a new dish). 
Note that this affects the partition of tables into dishes, so it needs to be taken into account in the likelihood. Let $\bt^*$ and $\bk^*$ represent the split proposal for the table and the dish assignments, with probability $q(\phi^* = (\bt^*,\bk^*); \phi = (\bt, \bk))$, which can be computed multiplying the probabilities of the restricted Gibbs sampling steps.
The posterior ratio $\pi(\bt^*,\bk^*)/\pi(\bt,\bk)$ is given by three main elements: the change in likelihood produced by the change in dish allocation, the change in prior probability of clustering costumers at tables, and the change in the prior for the clustering of tables into dishes. 
$$
\frac{\pi(\bt^*,\bk^*)}{\pi(\bt,\bk)} = 
\frac{p(\{y_{ji} : k^*_{jt_{ji}} = k_1 \})p(\{y_{ji} : k^*_{jt_{ji}} = k_2 \})}{p(\{y_{ji} : k_{jt_{ji}} = k_1 \})p(\{y_{ji} : k_{jt_{ji}} = k_2 \})}
\frac{\Gamma(n^*_{k_1})\Gamma(n^*_{k_2})(1+\eta\I(n_{k_2}=0))}{\Gamma(n_{k_1})\Gamma(n_{k_2})} 
 \frac{\Gamma(n^*_{t_1})\Gamma(n^*_{t_2})\alpha}{\Gamma(n_{t_1})}.
$$

\item {\bf Merge} If two tables are merged, they get assigned to the dish of $i_1$'s table and the merge happens in one unique way; however the reverse move needs to be computed. Thus, similarly to the Split move, we need to compute a {\it launch} split \citep{jain2004split} and we compute the probability to go from the launch split to the two original clusters; moreover, we compute the probability of choosing that particular dish. As before the likelihood is affected by the change in dish allocation and the prior by the change in table assignments and dish assignments.
\end{itemize}

\paragraph{Sampling $\bk$} We finally sample the partition of tables into dishes. This is similarly done using a split merge algorithm which is performed in the same way as in the DP mixture model, with the exception that now all the costumers seating at the tables corresponding to a dish are used to compute the likelihood corresponding to that cluster. Let $\bk^*$ be the proposed dish assignment that corresponds to splitting dish $k_1$ in $\bk$, with $k_2$ corresponding to a new dish in $\bk^*$. In this case the posterior ratio $\pi(\bk^*)/\pi(\bk)$ is given by 
$$
\frac{p(\{y_{ji} : k^*_{jt_{ji}} = k_1 \})p(\{y_{ji} : k^*_{jt_{ji}} = k_2 \})}{p(\{y_{ji} : k_{jt_{ji}} = k_1 \})}
\frac{\Gamma(n^*_{k_1})\Gamma(n^*_{k_2})\eta}{\Gamma(n_{k_1})}
$$

\subsection{Split-merge for nHDP}
\label{app:revJMP}

In the nested Hierarchical Dirichlet Process, restaurants are no longer fixed entities, but they are clusters of groups of costumers. Let $r_g$ be restaurant allocation of group $g$ (which is defined by the low-resolution partition), and $\br = (r_g: \forall \, g)$. Moreover let $g_c$ be the fixed mapping relating costumer $c$ to the corresponding group (i.e. which census tract contains the block group $c$) and let $r_{g_c}$ be the restaurant associated to costumer $c$ through its group $g$.
In this model, on top of sampling $\bt$ and $\bk$ given the restaurant assignment, we need to sample the partition of groups into restaurants $\br$.

We use a Metropolis-Hastings MCMC sampling algorithm, in which the chain moves from state $\phi$ to state $\phi^*$ with probability $\alpha(\phi^*, \phi) = \min\{1, A(\phi^*,\phi)\}$ and 
$$A(\phi^*,\phi) = \frac{q(\phi \vert \phi^*)p(\phi^*)}{q(\phi^* \vert \phi)p(\phi)}.$$

\paragraph{Sampling $\bt$ and $\bk$.} This step reduces to the split-merge sampling for the HDP described in the previous section, given the restaurant allocation of all the costumers defined by $(r_{g_c}: \forall c)$.
\paragraph{Sampling $\br$.} Since $\br$ defines the division of groups into restaurants, it influences the prior probability of the assignment of costumers into tables $\bt$. In fact, changing the number of costumers of a restaurant affects the probability of the partition, even when the clusters remain unchanged.
Moreover, changing $\br$ also affects the table assignment itself, because in some cases, when changing the restaurant assignment of a group, the table assignments might become incompatible with the proposed restaurant assignment. 

Consider e.g. the case of splitting a restaurant in two new restaurants, when some costumers across these two new restaurants are sitting at the same table. For two costumers to be separated in the two new restaurants, they must belong two different groups, whose restaurant assignment is changed in the split.
Specifically, let $j_1$ and $j_2$ be these two groups, such that $s = r_{j_1} = r_{j_2}$ but $s_1 = {r}^*_{j_1} \neq {r}^*_{j_2} =s_2$ and let $i_1$ and $i_2$ be two costumers such that $j_1 = g_{i_1}$ and $j_2 = g_{i_2}$. If these two costumers are sitting at the same table before the split move, $t_{si_1} = t_{si_2}$, they cannot still sit in the same table after the split move, i.e. we cannot have ${t}^*_{s_1 i_1} = {t}^*_{s_2 i_2}$, since sharing a table between two different restaurants is not possible. Thus such table assignment has probability zero given the proposed restaurant assignment and needs to be resampled together with it. That is, we need to propose, together with ${\br}^*$, a new table assignment ${\bt}^*$ such that ${t}^*_{s_1 i_1} \neq {t}^*_{s_2 i_2}$.
Moreover, as we saw in Section~\ref{app:smHDP}, to propose a new value for $\bt$, we need to also propose a new value for $\bk$. 
As a consequence, our split and merge move for $\br$ is in fact a move that affects all the assignments $\br, \bt$ and $\bk$. In other words, the chain moves from state $\phi = (\br, \bt, \bk)$ to $\phi^* = (\br^*, \bt^*, \bk^*)$, with a proposal that can be factorized conditionally:
$$
q(\phi^* \vert \phi) = q(\br^* \vert \br) q(\bt^* \vert \br^*, \bt) q(\bk^* \vert \bt^*, \bk).
$$

We randomly sample two groups indices $j_1$ and $j_2$ and if $r_{j_1} = r_{j_2} = s$ we split that restaurant; if instead $s_1 = r_{j_1} \neq r_{j_2} = s_2$ we merge the restaurants $s_1$ and $s_2$.

\begin{itemize}
\item {\bf Split} The split move needs to sample the new restaurant assignment of all the groups in restaurant $s$ except for $j_1$ and $j_2$, i.e. for $G_s = \{ j \neq j_1,j_2 : r_j = s\}$. Let $s_1 = s$ and $s_2 = K_r + 1$ be the two sub-restaurants in $\br^*$, where $K_r$ is the number of clusters in $\br$. We assign $j_1$ to $s_1$ and $j_2$ to $s_2$, that is $r^*_{j_1} = s_1$ and $r^*_{j_2} = s_2$, and we sample $r^*_j \in \{s^*_1, s^*_2\}$ for all $j \in G_s$. This is done according to $q_{\rm split}(\br^* \vert \br) = \prod_{j \in G_s} q_{\rm split}(r^*_j \vert \br^*_{-j}, \br)$, described below.

Given a proposed restaurant assignment $\br^*$, we sample the proposed table assignment $\bt^*$ and dish assignment $\bk^*$, taking into account that each new restaurant $s$ in $\br^*$ contains all the costumers $i$ whose group $g_i$ belongs to restaurant $s$, i.e. all costumers $i$ such that $r^*_{g_i} = s$. Thus the table assignment $\bt$ is changed to replace restaurant $s$ with $s_1$ and to include restaurant $s_2$. The table assignments are changed in the following way: 
\begin{itemize}
\item if the costumers sitting at table $h$ in $\bt_s$ all belong to groups that are assigned to some $s_k$ in $\br^*$, the table remains unchanged in $s_k$, for $k=1,2$. In other words, if $\{ r^*_{g_i} : t_i = h \} = \{ s_1 \}$, then table $h$ remains unchanged in $s_1$, and if  $\{ r^*_{g_i} : t_i = h \} = \{ s_2 \}$, then table $h$ remains unchanged in $s_2$. Moreover, the dish assignment of the table does not change: if $k_h = d$, then $k^*_h = d$.
\item if some of the costumers sitting at table $h$ in $\bt_s$ belong a group assigned to $s_1$ in $\br^*$, and other costumers belong to groups assigned to $s_2$ in $\br^*$, then the table is split into two sub-tables, one for each sub-restaurant. This happens if $\{ r^*_{g_i} : t_i = h \} = \{ s_1, s_2 \}$. The two subtables $h_1$ and $h_2$ are created deterministically, assigning to each one the costumers that belong to groups that are assigned to the corresponding sub-restaurant. So $t^*_i = h_k$ if and only if $r^*_{g_i} = s_k$ for $k = 1,2$ for all $i$ such that $t_i = h$. Moreover, the dish assignment of the tables does not change: if $k_h = d$, then $k^*_{h_1} = d$ and $k^*_{h_2} = d$. 
\end{itemize}
Note that these changes to the table and dish assignments, do not affect the costumer to dish assignment: even though a costumer might belong to a different restaurant or seat to a different table, its dish assignment will remain the same. 
Moreover, since the table and dish assignments are changed in a deterministic way, $q_{\rm split}(\bt^* \vert \br^*, \bt)=1$ and $q_{\rm split}(\bk^* \vert \bt^*, \bk)=1$. Thus we only need to specify $q_{\rm split}(\br^* \vert \br)$. This is done with a restricted Gibbs sampling step: $q_{\rm split}(\br^* \vert \br) = \prod_{j \in G_s} q_{\rm split}(r^*_j \vert \br^*_{-j},\br)$ and we choose 
\begin{align*}
q_{\rm split}(r^*_j = s_k \vert \br^*_{-j}, \br) &= p(r^*_j, \bt^*, \bk^* \vert \y, \br^*_{-j}) =\\
&=\frac{ p(\y \vert \bk^*, \bt^*) p(\bk^* \vert \bt^*) p(\bt^* \vert \br) n_{-j, s_k} }
{\sum_{k=1,2} p(\y \vert \bk^*, \bt^*) p(\bk^* \vert \bt^*) p(\bt^* \vert \br) n_{-j, s_k}  }, \quad k=1,2
\end{align*}
where we have denoted with $\bt^*$ and $\bk^*$ the table and dish assignments proposed in the deterministic way we just described and $n_{-j, s_k}$ is the size of cluster $s_k$ excluding element $j$. Note that since the dish assignment of each costumer remains constant, we can simplify the proposal distribution above: $p(\y \vert \bk^*, \bt^*)$ remains constant for all $\br$ considered in this restricted Gibbs sampling step. Moreover, $p(\bt^* \vert \br) \propto p(\bt^*_{s_1} \vert \br) p(\bt^*_{s_2} \vert \br)$, as the other table assignments are not affected in this step. 
Thus 
$$
q_{\rm split}(r^*_j = s_k \vert \br^*_{-j}, \br) = \frac{ p(\bk^* \vert \bt^*) p(\bt^*_{s_1} \vert \br) p(\bt^*_{s_2} \vert \br) n_{-j, s_k} }
{\sum_{k=1,2}  p(\bk^* \vert \bt^*) p(\bt^*_{s_1} \vert \br) p(\bt^*_{s_2} \vert \br)n_{-j, s_k}  }, \quad k=1,2
$$
Note that it is also thanks to these deterministic proposal distributions that by simply sampling $\br$, we can sample in the multidimensional space of $(\br, \bt, \bk)$. Additionally, note that to simplify the computations, instead of $p(\bk^* \vert \bt^*)$ we consider $p(\bk^* \vert \bt^*)/ p(\bk \vert \bt)$. This can simply be computed as $\prod_{k \in D_s} \Gamma(n_k + m_k)/\Gamma(n_k)$, where $n_k$ is the number of tables belonging to dish $k$ in the original assignment $\bk$, $m_k$ is how many of those tables were split into two sub-tables, and $D_s$ is the set of dishes served in restaurant $s$. 

\item {\bf Merge} The merge move changes the restaurant assignment of all the groups in restaurants $s_1$ and $s_2$, $G_{s_1,s_2} = \{j : r_j \in \{s_1, s_2\} \}$. Let $s$ be the new restaurant which will replace $s_1$ and let $r^*_j = s$ for all $j \in G_{s_1,s_2}$ (restaurant $s_2$ gets removed from $\br^*$). Note that $q_{\rm merge}(\br^* \vert \br) = 1$. 

As the restaurant assignment is changed, the table and dish assignments need to change too. As before, we need to take into account that a new restaurant $s$ in $\br^*$ contains all the costumers $i$ whose group $g_i$ belongs to restaurant $s$, i.e. $r^*_{g_i} = s$.
A naive proposal for changing the table assignment would be to move all tables of $s_2$ to the new merged restaurant $s$. However, we need to choose a proposal that can make the split move reversible. For this reason, in the table assignment proposal we need to merge some tables that belonged to the two restaurants. 

Specifically, for each dish $d$ we consider the tables in the two restaurants $s_1$ and $s_2$ that were assigned to dish $d$. Let $T^d_{s_k} = \{ h \; {\rm table \; in } \; s_k: k_h = d \}$ for $k=1,2$. If there is at least one such table in each restaurant, i.e. $\# T^d_{s_k}>0$ for both $k=1,2$, we combine tables into pairs. This is done by considering the restaurant with the least number of such tables, say $s_1$, and considering a one-to-one function $f$ from its tables $T^d_{s_1}$ to the ones in the other restaurant $T^d_{s_2}$, sampled uniformly at random. Thus, if $k_d = \# T^d_{s_1}$ and $n_d = \# T^d_{s_2}$, the probability of sampling $f$ is $\frac{1}{n_d!/(n_d-k_d)!}$.
Given this matching $f$, we consider the events of merging or not the tables in each pair $(h_{i,1}, h_{i,2})$ with probability $p_{h_{i,1}, h_{i,2}}$; we found the value $p_{h_{i,1}, h_{i,2}} = 0.5$ to be working well. 
Note that if some tables $h_1$ and $h_2$ are merged in table $h$, then $t^*_i = h$ for all $i$ such that $t_i \in \{h_1, h_2\}$; otherwise  $t^*_i = t_i$.
Note that in either case the dish assignment will not change, $k^*_h = d$ in the former case, or $k^*_{h_k} = d$ for $k=1,2$ in the latter.

Thus the overall probability of the new table assignment $\bt^*$ is given by
$$
q_{\rm merge}(\bt^* \vert \br^*, \bt) = \prod_{d} \left[\frac{1}{n_d!/(n_d-k_d)!}\prod_{i = 1}^{k_d} \left([p^m_{t_1, t_2}]^{\I({\rm m})}[1-p^m_{t_1, t_2}]^{\I({\rm s})} \right) \right]
$$
where $\I(m)$ and $\I(s)$ are the indicators of a split or a merge. Note that, as in the split move, the choice for $\bk^*$ is deterministic and $q_{\rm merge}(\bk^*\vert \bt^*, \bk) = 1$.
\end{itemize}

Remember now that to find the acceptance probability $\alpha(\phi^*, \phi)$ we need to consider the proposed move and the reverse move. Thus, to compute $A_{\rm split}(\phi^*,\phi)$ we have
\begin{align*}
A_{\rm split}(\phi^*,\phi) &= \frac{q_{\rm merge}(\phi \vert \phi^*) p(\phi)}{q_{\rm split}(\phi^* \vert \phi) p(\phi)} =\\
&= \frac{q_{\rm merge}(\bt \vert \br, \bt^*)}{q_{\rm split}(\br^* \vert \br) } \frac{p(\y \vert \bt, \bk) p(\bk \vert \bt) p(\bt \vert \br) p(\br)}{p(\y \vert \bt^*, \bk^*)p(\bk^* \vert \bt^*) p(\bt^* \vert \br^*) p(\br^*)}=\\
&= \frac{q_{\rm merge}(\bt \vert \br, \bt^*)}{q_{\rm split}(\br^* \vert \br) } \frac{p(\bk \vert \bt) p(\bt \vert \br) p(\br)}{p(\bk^* \vert \bt^*) p(\bt^* \vert \br^*) p(\br^*)},
\end{align*}
where the likelihood ratio $p(\y \vert \bt, \bk)/p(\y \vert \bt^*, \bk^*)$ can be ignored because it's equal to 1, as the dish assignment does not change.
Similarly, for $A_{\rm merge}(\phi^*,\phi)$ we have
\begin{align*}
A_{\rm merge}(\phi^*,\phi) &= \frac{q_{\rm split}(\phi \vert \phi^*) p(\phi)}{q_{\rm merge}(\phi^* \vert \phi) p(\phi)} =\\
&= \frac{q_{\rm split}(\br \vert \br^*)}{q_{\rm merge}(\bt^* \vert \bt, \br^*) } \frac{p(\bk \vert \bt) p(\bt \vert \br) p(\br)}{p(\bk^* \vert \bt^*) p(\bt^* \vert \br^*) p(\br^*)}.
\end{align*}

\section{Simulations}

In Section~3 of the main manuscript we present a simulation study comparing the performance of the proposed method and other competing methods. We now describe how the synthetic data was generated and present additional results.

\subsection{First framework: mixtures of normals}
\subsubsection{Synthetic data generation}

We consider six distributions, $F_k$, with $k = 1, \ldots, 6$, each of them being a mixture of up to six normals with different weights: $F_k(x) = \sum_{i = 1}^6 w_{ki} \phi((x - \mu_i)/\sigma)/\sigma$, where the means are equally distanced around zero with $\mu_i - \mu_{i-1} = 2.5$ and $\sigma = 0.5$. This level of means separation is what we consider a moderate ``cluster separation'', where the main modes are distinguishable, but there is some small mass overlap between the mixtures.
The mixture weights used $\w_k = (w_{k1}, \ldots, w_{k6})$, reported in Table \ref{tab:mixture_weights}, are chosen to make the different distributions distinguishable.

For each low resolution unit $A_\ell$, we sample $G_\ell$ uniformly among the mixtures $F_k$, $k = 1, \ldots, 6$. This is equivalent to uniformly sampling a low-resolution cluster assignment $z_\ell$, and assign $G_\ell = F_{z_\ell}$.
We then sample the high-resolution ob $\theta_{\ell, h}$ from the $(\mu_i)$ according to the weights $\w_k$ corresponding to the sampled $F_{z_\ell}$. Again, this can be seen as sampling the high-resolution cluster assignment $z_{\ell,h}$ from $i = 1, \ldots, 6$, with probabilities given by $\w_{z_\ell}$.
The observation $y_{\ell, h}$ is then sampled from a normal distribution $N(\mu_{z_{\ell,h}}, \sigma^2)$. 
In other words, the observations are sampled from $G_\ell$, i.e. from the corresponding $F_{z_\ell}$.

\begin{table}
\centering
\begin{tabular}{c c } 
$\w_1$ &= (0, 0.6, 0.3, 0, 0.1, 0) \\ 
$\w_2$ &= (0.4, 0, 0.1, 0.1, 0.4, 0) \\
$\w_3$ &= (0.1, 0, 0, 0, 0.3, 0.6) \\
$\w_4$ &= (0, 0, 0.5, 0.5, 0, 0) \\
$\w_5$ &= (0.2, 0.2, 0, 0.2, 0.2, 0.2) \\
$\w_6$ &= (0, 0, 0, 1, 0, 0) \\
\end{tabular}
\caption{Mixture weights used for each distribution $F_k$.\label{tab:mixture_weights}}
\end{table}

Note that the same low-resolution partition is used when we increase the number $n_\ell$ of high-resolution units within each low-resolution unit.

\subsubsection{Choice of hyperparameters}

We chose the prior distribution of $\sigma^2$ to be centered around $0.25$, the true value used to generate the data, with a somewhat large variance. Thus, we set $\beta_0 = 5, \beta_1 = 1$. Since the data was not standardized before fitting the model, we chose a value of $k_0 = 1/100$, to make sure that the base measure covers the whole range of the data.
In this simulation, the concentration parameters were fixed, with $\alpha_0 = 1, \alpha_1 = 0.5, \alpha_2 = 1$. The choice is motivated by a desire for a small number of clusters, together with the empirical observation that small values of $\alpha_1$ make the model more stable. 

\subsubsection{Additional results}

In Section 3 of the main paper, we presented the results for the simulation with the number of low-resolution units $L = 25$. We now present similar results for $L = 10$ and $L = 50$.

\begin{figure}[H]
\centering
\includegraphics[width = \textwidth]{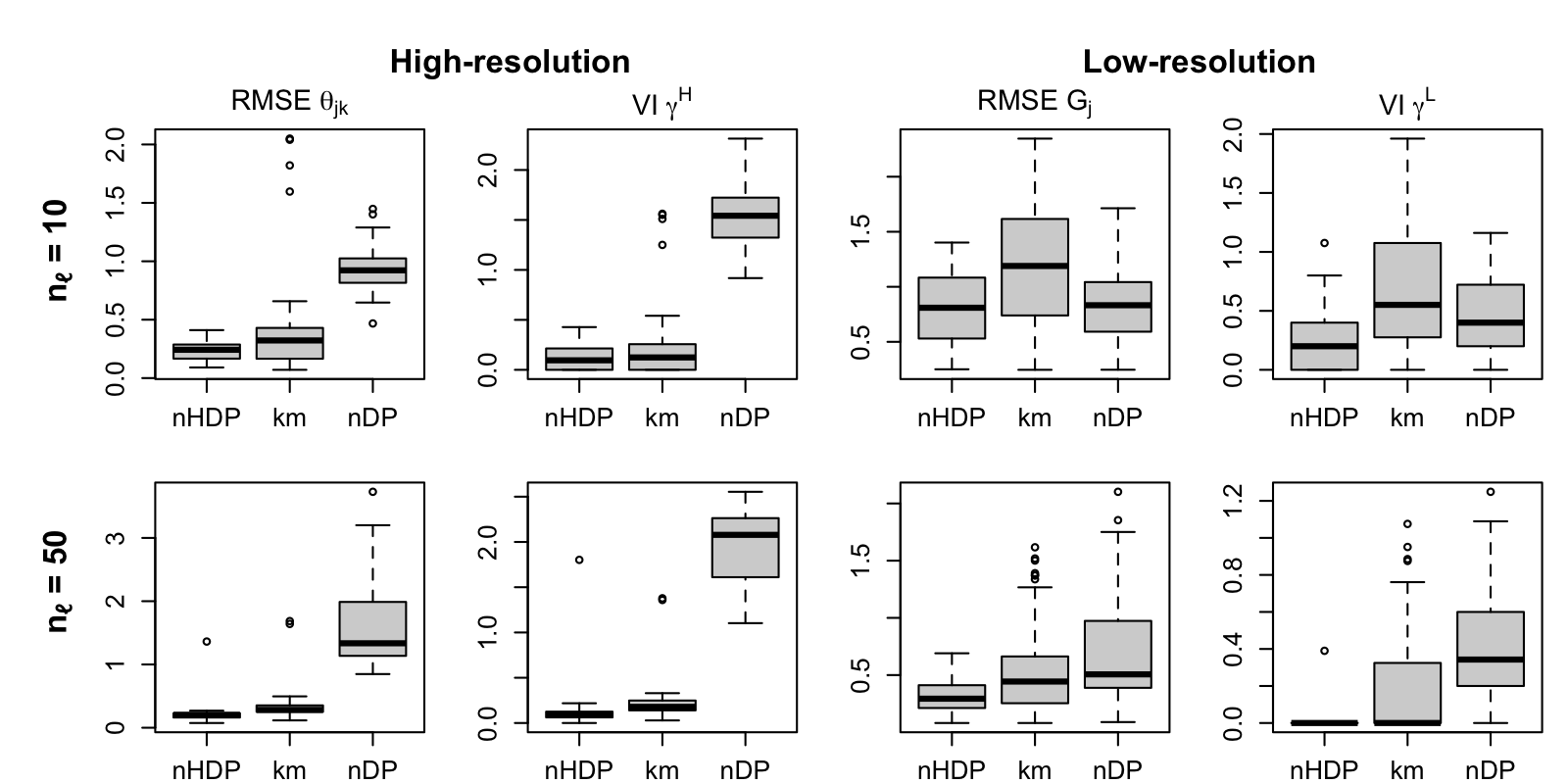} 
\caption{Parameter estimation and partition recovery measures at high-resolution (left panels) and low-resolution (right panels) for synthetic data with $L = 10$ and $n_\ell = 10$ (top panels) and $n_\ell = 50$ (bottom panels). \label{fig:sim_L10}}
\end{figure}

Figure~\ref{fig:sim_L10} and ~\ref{fig:sim_L50} show the measures of parameter estimation and partition recovery for both the high and the low-resolution levels, respectively when $L = 10$ and $L = 50$. 
We note that the methods have similar performances to the case of $L = 25$ reported in the main manuscript. In particular, all the methods tend to have worst low-resolution estimation and partition recovery when the number of low resolution units is smaller ($n_\ell = 10$). This is expected, as the estimation of the $G_\ell$ relies on the number of units within each low-resolution unit $A_\ell$.
The adapted version of k-means has particularly poor performance when $n_\ell$ is small, for both values of $L$.
Instead, when the number $n_\ell$ of high-resolution units increases, the k-means method performs as well as the nHDP. 
Unfortunately, the nDP method tends to have poorer estimation and recovery performance both for the high and low-resolution levels, and for all values of $n_\ell$ and $L$.

\begin{figure}[H]{}
\centering
\includegraphics[width = \textwidth]{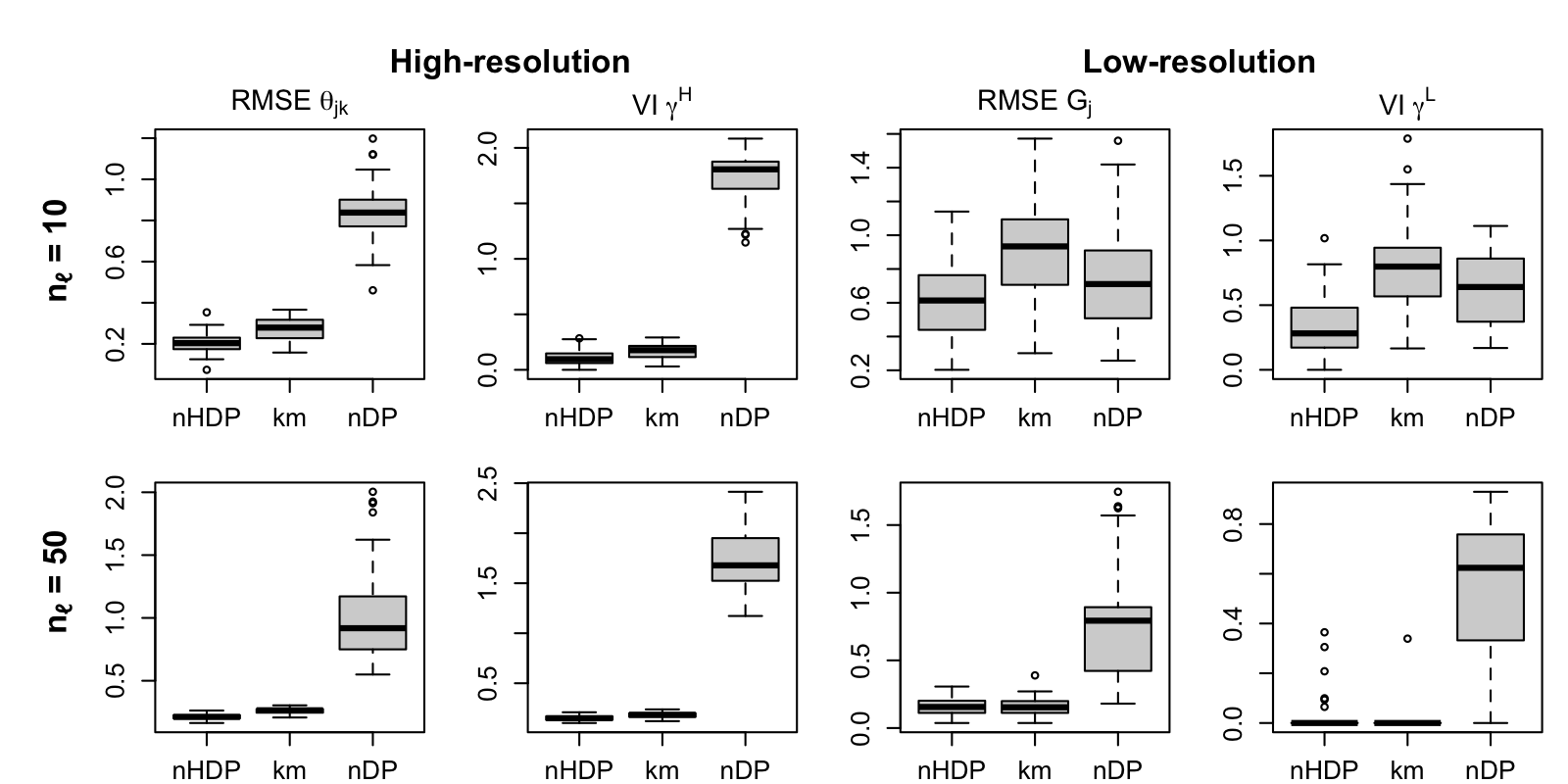} 
\caption{Parameter estimation and partition recovery measures at high-resolution (left panels) and low-resolution (right panels) for synthetic data with $L = 50$ and $n_\ell = 10$ (top panels) and $n_\ell = 50$ (bottom panels). \label{fig:sim_L50}}
\end{figure}

\subsection{Second framework: data generated from the model}
\subsubsection{Synthetic data generation}

To generate simulated datasets with distinguishable low resolution clusters, we implemented the following steps:
\begin{enumerate}
\item Using the stick-breaking construction for the Dirichlet Process, we approximate $Q = \sum_i w_i \delta_{G^*_i}$ with a truncated $\tilde{Q} = \sum_{i=1}^m \tilde{w}_i \delta_{\tilde{G}^*_i}$, where $m$ is the number of low-resolution units. We generate $\tilde{w}_1, \ldots, \tilde{w}_{m-1}$ and compute $\tilde{w}_{m} = 1-\sum_{i=1}^{m-1} \tilde{w}_i$.
\item We generate the cluster assignments of the low-resolution units $z_j$ by sampling them from $\sum_{i=1}^m \tilde{w_i} \delta_i$.
\item We generate the atoms $\tilde{G}^*_i$ of $\tilde{Q}$ drawing from a finite-dimensional approximation of the HDP, with the stick-breaking constructions of \cite{teh2006hierarchical} truncated at the number of high-resolution units $n$. 
Note that to make sure the different low-resolution clusters are distinguishable, it is important that the $\tilde{G}^*_i$ are sufficiently different.
Thus we created a rejection sampling algorithm that generates a vector of cluster-specific discrete distributions $\tilde{G}^*_1, \ldots, \tilde{G}^*_m$ such that for any $i,j$, the total variation distance $TV(\tilde{G}^*_i, \tilde{G}^*_j) > \epsilon$. We found that $\epsilon = 0.8$ was generating distributions that were sufficiently different and could be distinguished for relatively low sample sizes. 
The rejection algorithm is as following: given the first $\tilde{G}^*_1, \ldots, \tilde{G}^*_{i-1}$, generate a new distribution $\tilde{G}^*$; if the TV distance between $\tilde{G}^*$ and anyone of the previous $\tilde{G}^*_j$ with $j < i$ is less or equal than $\epsilon$, then discard it and sample a new $\tilde{G}^*$, repeat the test until a suitable $\tilde{G}^*$ is sampled and set $\tilde{G}^*_i$ equal to it.
\item Given the set of distinct $\tilde{G}^*_i$, we sample the cluster assignments of the high-resolution units $z_{lj}$ from $\sum_{k=1}^n \tilde{p}_{z_j k} \delta_k$ where $\tilde{p}_{z_j k}$ are the weights of $\tilde{G}^*_{z_j}$, the discrete measure associated to the low-resolution cluster of unit $j$.
\item Given the high-resolution clusters, we generate the cluster-specific means $\theta^*_k$. Instead of drawing them i.i.d. from a base distribution, we again make sure that the values chosen are distinct, so that the clusters are actually distinguishable. We choose them evenly spaced, centered at 0, with minimum distance $\theta^*_{k+1}-\theta^*_{k} = \kappa \sigma$, with $\kappa = 5$ in the medium cluster separation framework, and $\kappa = 8$ in the high cluster separation framework. 
\item The high-resolution data is then generated, with $y_{lj} \sim N(\theta^*_{z_{lj}}, \sigma^2)$, with a fixed value of $\sigma = 0.5$. For out-of-sample evaluation, we also generate a second set of data, $y^*_{lj} \sim N(\theta^*_{z_{lj}}, \sigma^2)$.
\item When used, the low-resolution data is aggregated by averaging the high-resolution data: $y_j = \frac{1}{m}\sum_{l=1}^m y_{jl}$ (equivalently $y^*_j = \frac{1}{m}\sum_{l=1}^m y^*_{jl}$).
\end{enumerate}

Note that generating a realization of the nHDP (step 1 and step 3) depends on the parameters $\alpha_0, \alpha_1, \alpha_2$. We consider two configurations, where $(\alpha_0, \alpha_1, \alpha_2) = (1,1,1)$ and $(\alpha_0, \alpha_1, \alpha_2) = (5,3,3)$. These two configurations allow us to generate different kind of data. In the first, the different $G^*_j$'s are quite similar, or have most support on different points, and we expect the nDP to perform quite well under this framework. In the second configuration instead, the variation among the $G^*_j$'s is larger, while having very similar support.

\begin{figure}[h]{}
\centering
\includegraphics[width = \textwidth]{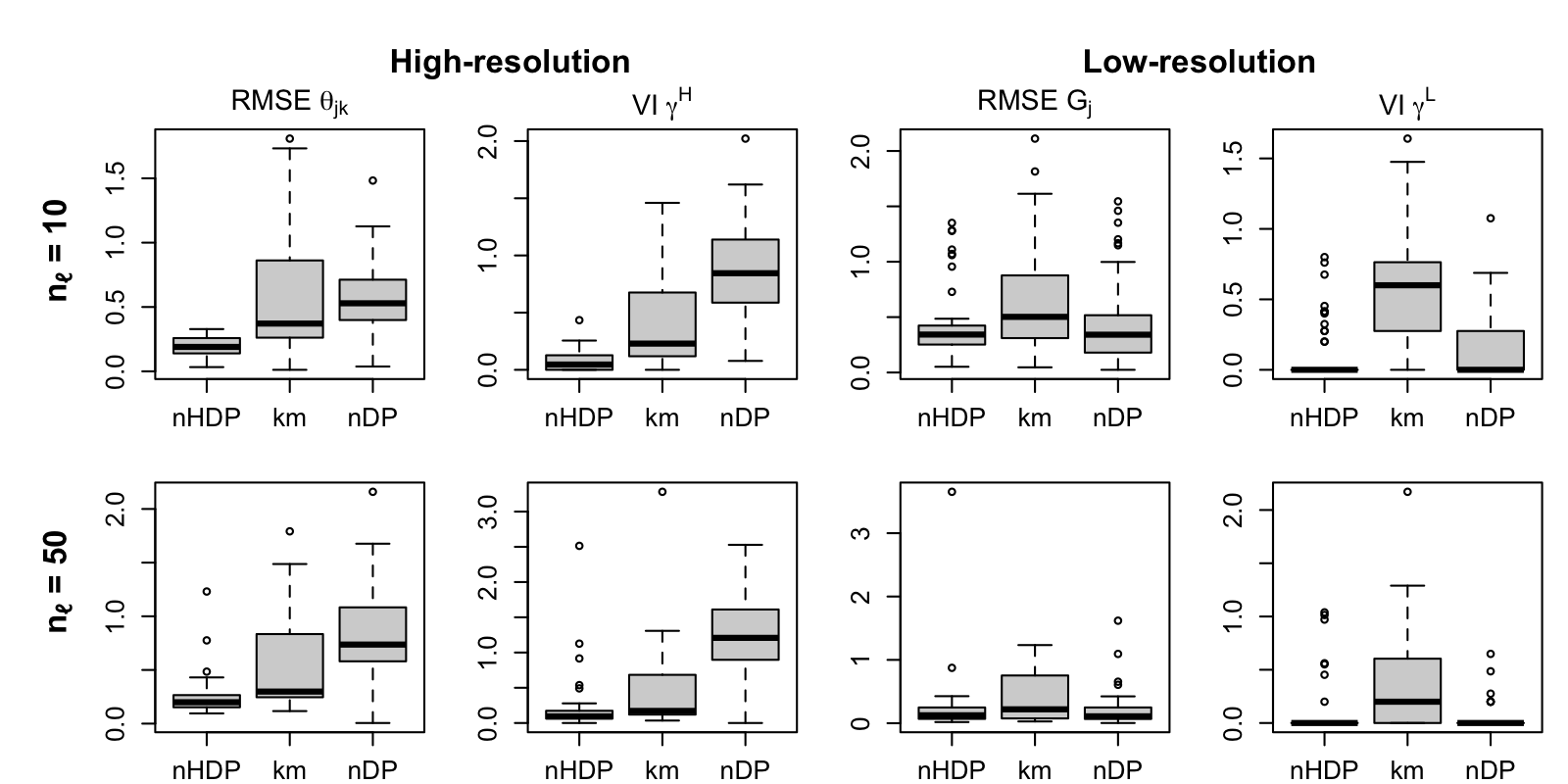}
\caption{Parameter estimation and partition recovery measures at high-resolution (left panels) and low-resolution (right panels) for synthetic data generated from the model, with $L = 10$. The top panels show the results for $n_\ell = 10$, while the bottom panels show those for $n_\ell = 50$, under the configuration with $\alpha_0 = 1, \alpha_1 = 1, \alpha_2 = 1$. \label{fig:sim_111}}
\end{figure}

\subsubsection{Choice of hyperparameters}

Similarly to the first framework, we chose the prior distribution of $\sigma^2$ to be centered around $0.25$, the true value used to generate the data, with a somewhat large variance. Thus, we set $\beta_0 = 5, \beta_1 = 1$. Since the data was not standardized before fitting the model, we chose a value of $k_0 = 1/100$, to make sure that the base measure covers the whole range of the data.
In this simulation framework, the concentration parameters were fixed equal to the true values used to generate the data.

\subsubsection{Results}

Similarly to the results reported in the main manuscript, when the synthetic data is generated from the model, we find that the nHDP achieves better performance compared to the nDP and k-means, for both parameter estimation and partition recovery at each resolution. Under the configuration given by the hyper-parameters $(\alpha_0, \alpha_1, \alpha_2) = (1,1,1)$ (Figure~\ref{fig:sim_111}), we see that the performance of the nDP is quite good, especially for the low-resolution measures. This is particularly evident in the case with $n_\ell = 50$, where the low-resolution partition recovered by both nDP and nHDP is almost always the true one; however, the nDP has much worse high-resolution estimation and partition recovery compared to the nHDP.

\begin{figure}[h]{}
\centering
\includegraphics[width = \textwidth]{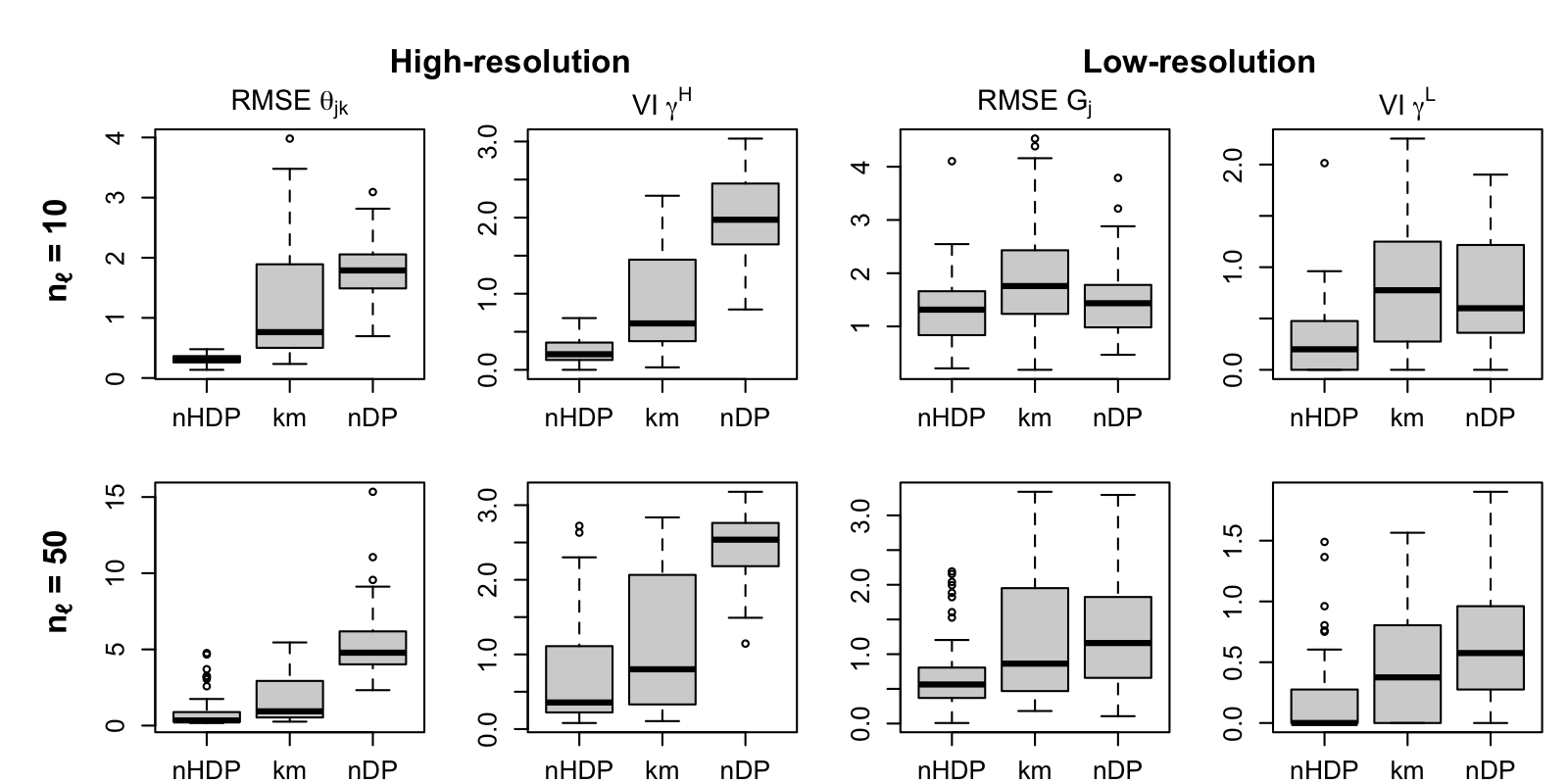} 
\caption{Parameter estimation and partition recovery measures at high-resolution (left panels) and low-resolution (right panels) for synthetic data generated from the model, with $L = 10$. The top panels show the results for $n_\ell = 10$, while the bottom panels show those for $n_\ell = 50$, under the configuration with $\alpha_0 = 5, \alpha_1 = 3, \alpha_2 = 3$. \label{fig:sim_353}}
\end{figure}

In the configuration where the distributions have stronger support overlap and variation, generated with the hyper-parameters $(\alpha_0, \alpha_1, \alpha_2) = (5,3,3)$ (Figure~\ref{fig:sim_353}), the nDP performs quite poorly, in both situations where $n_\ell = 10$ and $n_\ell = 50$, for both the measures related to the high-resolution and the low-resolution partition.

In both cases, k-means shows quite poor performance, for both partitions and parameter estimations. Contrary to the simulation with mixtures of normals, k-means did not show a significant improvement of performance for larger values of $n_\ell$.


\subsection{Details on parameter estimation measures}

In our simulations, we report a measure of parameter estimation and one of partition recovery for each resolution.
As measure of parameter estimation for $\gamma^{\mathcal{H}}$, we report the root mean squared error (RSME) for the estimation of the mean of the data $\theta_{\ell,h} = \E[y_{\ell,h}]$. For $\gamma^{\mathcal{L}}$, we use the RMSE for the estimation of the means of the mixtures assigned to the $\ell$-th low resolution unit $\phi_\ell = \E[\theta_{\ell,h} \vert G_\ell = F_j] = \sum_{k=1}^6 w_{j,k} \mu_{k}$. 
As estimators, for the data means we use the posterior mean marginally on the high resolution partition, $\hat{\theta}_{\ell,h} = \hat{\E}[\theta_{\ell,h} \vert \y]$; for the distributional means we use the estimator given by $\hat{\phi}_{\ell} = \hat{\E}[ \hat{\E}[\phi_\ell \vert \gamma^{\mathcal{L}}] \vert \y]$, where if the $\ell$th low-resolution unit is in the $k$th cluster $\mathcal{C}_k$ (i.e. if $G_\ell = F_k$) then $\hat{\E}[\phi_\ell \vert \gamma^{\mathcal{L}}] = \frac{1}{\vert \mathcal{C}_k \vert}\sum_{\ell \in \mathcal{C}_k}  (\sum_h y_{\ell,h})/n_\ell $. In other words, the distributional mean for each cluster is found by averaging the low-resolution data (i.e. high-resolution data which has been aggregated at the low-resolution) for the units in that (low resolution) cluster.

\section{Analysis of crime density in West Philadelphia}

\subsection{Choice of hyperparameters}

In Section~2.3 of the main manuscript, we presented the full model we use to describe crime density in Philadelphia. We now describe how we chose the hyperparameters $\beta_0, \beta_1$ that specify the prior for $\sigma^2,$ and the hyperparameter $k_0$ which determines the ratio of within-cluster and between-cluster variation. 

Before fitting our model to the data we standardized the $y_{\ell, h}$ to have overall mean 0 and standard deviation 1. We expect the within-cluster standard deviation of the standardized data to be far less than 1; in particular a value less than 0.5 (i.e. less than half of the overall data standard deviation).
We then chose the distribution of $\sigma^2$ to have a prior mean of 0.25, and a prior variance of 0.1. This prior assigns approximately 0.7 probability to a value of $\sigma^2$ less than 1, and less than 0.02 probability to a value of $\sigma^2$ greater than 1. 
We also expect the within-cluster standard deviation not to be too small, as that would cause the data to be split into a very large number of clusters. We chose a value of $k_0$ that would give a large prior probability that $\sigma^2/k_0$ is greater than 1 (the variance of the standardized data), given the chosen prior distribution for $\sigma^2$; in other words, we want the base distribution $H_0$ to cover the overall distribution of the data. We chose $k_0 = 1/10$, which ensures a prior probability of more than 0.8 that $H_0$ covers the data.

We finally specify a prior distribution for the concentration parameters $\alpha_0, \alpha_1, \alpha_2$. 
Remember that under the DP model with concentration parameter $\eta$, the average asymptotic number of clusters for $n$ observations is $\eta \log(n)$. To avoid a too small or too large number of clusters, we chose to model the prior as a truncated normal with prior mean equal to 2 and prior standard deviation 1.

\subsection{Partition estimation}

Following the recommendation of \cite{wade2018bayesian}, we summarize the retained posterior samples finding the partition that minimizes the Variation of Information (VI) distance from the posterior samples. We used the package \cite{mcclustext} to find the best candidate among the MCMC draws. 
We additionally searched for the best candidate among the partitions found with hierarchical clustering (the function \texttt{hclust} in R), using different number of clusters, from 1 to the maximum number of clusters found by the MCMC. This essentially equivalent to using the function \texttt{mcclust.ext::minVI} with the option \texttt{method = "all"}, with the difference that we are choosing the partition that minimizes the actual VI distance (using function \texttt{mcclust.ext::VI}) instead of its lower bound (\texttt{mcclust.ext::VI.lb}).
The partitions reported in the analysis presented in the main manuscript are the ones that achieved the minimum VI distance.

\section{Graphical representation of models}

In Figure~\ref{fig:diagram} we represent pictorially a possible realization of the discrete measures in the nDP and HDP models, and how they are combined to construct the nHDP: as the diagram shows, both the nDP and the nHDP share the discrete measure $Q$ (represented by the large rectangle), but its atoms $Q^*_j$ differ. In the nDP, all the measures $Q^*_j$ have different atoms locations, which is the reason why the nDP produces only partitions whose clusters are nested within clusters. In fact, if two $G_\ell$ are not equal to the same $Q^*_j$, they will not have the same support and the corresponding $\theta_{\ell,h}$ will not have a chance to be equal.
In the nHDP instead, the the $Q^*_j$ are equivalent to the $G_\ell$ in the HDP, i.e. different $Q^*_j$'s have the same atoms but different weights (the vertical lines are located in the same locations but have different height). 
While in the HDP each represented discrete distribution is a group-specific distribution $G_\ell$ and all of them are different because they have different weights, the nHDP allows some of the group-specific measures $G_\ell$ to be equal to the same $Q^*_j$, thanks to the discrete measure $Q$. This allows clustering of the low-resolution units, contrary to the HDP. Note that in the nHDP, even when two groups have different $Q^*_j$'s, they share the same atoms $\theta_k$, which allows the parameters of the high resolution units $\theta_{\ell,h}$ to be clustered together even if their groups $G_{\ell_1}$ and $G_{\ell_2}$ are not clustered together.

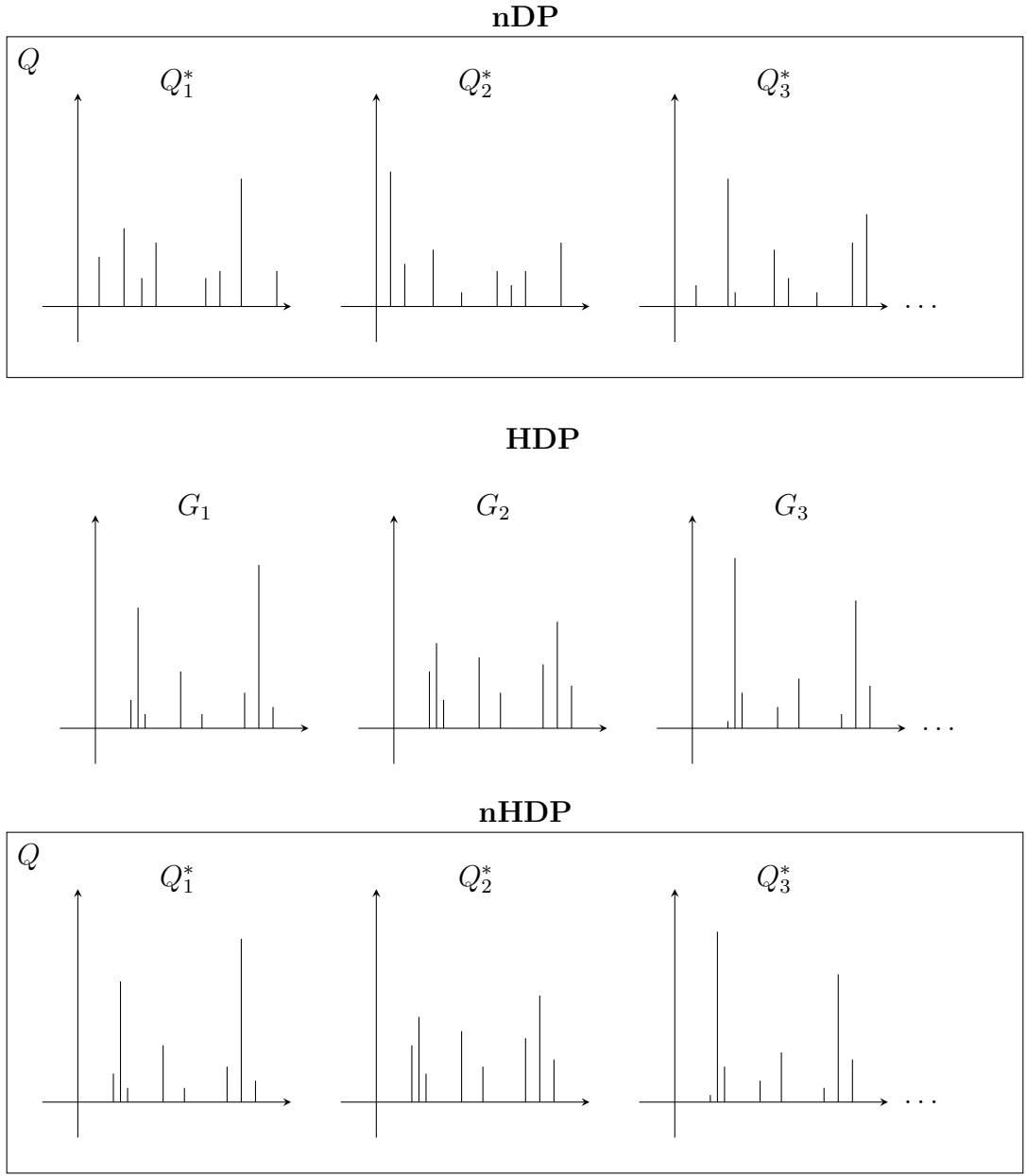
\begin{figure}[H]
\centering
\begin{tikzpicture}

\pgfmathsetmacro\len{2.8}     
\pgfmathsetmacro\spa{1.4}
\pgfmathsetmacro\xspa{1.4}   

\coordinate (a) at (0,0);
\coordinate (b) at (\len+\spa,0);
\coordinate (c) at (2*\len+2*\spa,0);

\node[right] at ($(a) + (-1,\len+0.65)$) {$Q$};
\node[above] at (3*\len/2+\spa+\spa/2,\len+1) {\textbf{nDP}};

\draw ($(a) - (1,1)$) rectangle ($(c) + (\len+\spa/2+\xspa,\len+1)$);

\draw [-stealth]($(a) + (-0.5,0)$) -- ($(a) + (3,0)$);
\draw [-stealth]($(a) + (0,-0.5)$) -- ($(a) + (0,3)$);
\draw ($(a) + (0.3,0)$) -- ($(a) + (0.3,0.7)$);
\draw ($(a) + (0.65,0)$) -- ($(a) + (0.65,1.1)$);
\draw ($(a) + (0.9,0)$) -- ($(a) + (0.9,0.4)$);
\draw ($(a) + (1.1,0)$) -- ($(a) + (1.1,0.9)$);
\draw ($(a) + (1.8,0)$) -- ($(a) + (1.8,0.4)$);
\draw ($(a) + (2.0,0)$) -- ($(a) + (2.0,0.5)$);
\draw ($(a) + (2.3,0)$) -- ($(a) + (2.3,1.8)$);
\draw ($(a) + (2.8,0)$) -- ($(a) + (2.8,0.5)$);
\node[above] at ($(a) + (\len/2,\len)$){$Q_1^*$};

\draw [-stealth]($(b) + (-0.5,0)$) -- ($(b) + (3,0)$);
\draw [-stealth]($(b) + (0,-0.5)$) -- ($(b) + (0,3)$);
\draw ($(b) + (0.2,0)$) -- ($(b) + (0.2,1.9)$);
\draw ($(b) + (0.4,0)$) -- ($(b) + (0.4,0.6)$);
\draw ($(b) + (0.8,0)$) -- ($(b) + (0.8,0.8)$);
\draw ($(b) + (1.2,0)$) -- ($(b) + (1.2,0.2)$);
\draw ($(b) + (1.7,0)$) -- ($(b) + (1.7,0.5)$);
\draw ($(b) + (1.9,0)$) -- ($(b) + (1.9,0.3)$);
\draw ($(b) + (2.1,0)$) -- ($(b) + (2.1,0.5)$);
\draw ($(b) + (2.6,0)$) -- ($(b) + (2.6,0.9)$);
\node[above] at ($(b) + (\len/2,\len)$){$Q_2^*$};

\draw [-stealth]($(c) + (-0.5,0)$) -- ($(c) + (3,0)$);
\draw [-stealth]($(c) + (0,-0.5)$) -- ($(c) + (0,3)$);
\draw ($(c) + (0.3,0)$) -- ($(c) + (0.3,0.3)$);
\draw ($(c) + (0.75,0)$) -- ($(c) + (0.75,1.8)$);
\draw ($(c) + (0.85,0)$) -- ($(c) + (0.85,0.2)$);
\draw ($(c) + (1.4,0)$) -- ($(c) + (1.4,0.8)$);
\draw ($(c) + (1.6,0)$) -- ($(c) + (1.6,0.4)$);
\draw ($(c) + (2.0,0)$) -- ($(c) + (2.0,0.2)$);
\draw ($(c) + (2.5,0)$) -- ($(c) + (2.5,0.9)$);
\draw ($(c) + (2.7,0)$) -- ($(c) + (2.7,1.3)$);
\node[above] at ($(c) + (\len/2,\len)$){$Q_3^*$};

\node at ($(c) + (\len+\spa/2,0)$) {\ldots};

\end{tikzpicture} \\

\vspace{15pt}

\begin{tikzpicture} 

\pgfmathsetmacro\len{2.8}     
\pgfmathsetmacro\spa{1.4}
\pgfmathsetmacro\xspa{1.4}  

\coordinate (a) at (0,0);
\coordinate (b) at (\len+\spa,0);
\coordinate (c) at (2*\len+2*\spa,0);

\node[above] at (3*\len/2+\spa+\spa/2,\len+1) {\textbf{HDP}};


\draw [-stealth]($(a) + (-0.5,0)$) -- ($(a) + (3,0)$);
\draw [-stealth]($(a) + (0,-0.5)$) -- ($(a) + (0,3)$);
\draw ($(a) + (0.5,0)$) -- ($(a) + (0.5,0.4)$);
\draw ($(a) + (0.6,0)$) -- ($(a) + (0.6,1.7)$);
\draw ($(a) + (0.7,0)$) -- ($(a) + (0.7,0.2)$);
\draw ($(a) + (1.2,0)$) -- ($(a) + (1.2,0.8)$);
\draw ($(a) + (1.5,0)$) -- ($(a) + (1.5,0.2)$);
\draw ($(a) + (2.1,0)$) -- ($(a) + (2.1,0.5)$);
\draw ($(a) + (2.3,0)$) -- ($(a) + (2.3,2.3)$);
\draw ($(a) + (2.5,0)$) -- ($(a) + (2.5,0.3)$);
\node[above] at ($(a) + (\len/2,\len)$){$G_1$};

\draw [-stealth]($(b) + (-0.5,0)$) -- ($(b) + (3,0)$);
\draw [-stealth]($(b) + (0,-0.5)$) -- ($(b) + (0,3)$);
\draw ($(b) + (0.5,0)$) -- ($(b) + (0.5,0.8)$);
\draw ($(b) + (0.6,0)$) -- ($(b) + (0.6,1.2)$);
\draw ($(b) + (0.7,0)$) -- ($(b) + (0.7,0.4)$);
\draw ($(b) + (1.2,0)$) -- ($(b) + (1.2,1.0)$);
\draw ($(b) + (1.5,0)$) -- ($(b) + (1.5,0.5)$);
\draw ($(b) + (2.1,0)$) -- ($(b) + (2.1,0.9)$);
\draw ($(b) + (2.3,0)$) -- ($(b) + (2.3,1.5)$);
\draw ($(b) + (2.5,0)$) -- ($(b) + (2.5,0.6)$);
\node[above] at ($(b) + (\len/2,\len)$){$G_2$};

\draw [-stealth]($(c) + (-0.5,0)$) -- ($(c) + (3,0)$);
\draw [-stealth]($(c) + (0,-0.5)$) -- ($(c) + (0,3)$);
\draw ($(c) + (0.5,0)$) -- ($(c) + (0.5,0.1)$);
\draw ($(c) + (0.6,0)$) -- ($(c) + (0.6,2.4)$);
\draw ($(c) + (0.7,0)$) -- ($(c) + (0.7,0.5)$);
\draw ($(c) + (1.2,0)$) -- ($(c) + (1.2,0.3)$);
\draw ($(c) + (1.5,0)$) -- ($(c) + (1.5,0.7)$);
\draw ($(c) + (2.1,0)$) -- ($(c) + (2.1,0.2)$);
\draw ($(c) + (2.3,0)$) -- ($(c) + (2.3,1.8)$);
\draw ($(c) + (2.5,0)$) -- ($(c) + (2.5,0.6)$);
\node[above] at ($(c) + (\len/2,\len)$){$G_3$};

\node at ($(c) + (\len+\spa/2,0)$) {\ldots};
\end{tikzpicture} \\

\vspace{10pt}

\begin{tikzpicture}

\pgfmathsetmacro\len{2.8}     
\pgfmathsetmacro\spa{1.4}
\pgfmathsetmacro\xspa{1.4}

\coordinate (a) at (0,0);
\coordinate (b) at (\len+\spa,0);
\coordinate (c) at (2*\len+2*\spa,0);

\node[right] at ($(a) + (-1,\len+0.65)$) {$Q$};
\node[above] at (3*\len/2+\spa+\spa/2,\len+1) {\textbf{nHDP}};

\draw ($(a) - (1,1)$) rectangle ($(c) + (\len+\spa/2+\xspa,\len+1)$);

\draw [-stealth]($(a) + (-0.5,0)$) -- ($(a) + (3,0)$);
\draw [-stealth]($(a) + (0,-0.5)$) -- ($(a) + (0,3)$);
\draw ($(a) + (0.5,0)$) -- ($(a) + (0.5,0.4)$);
\draw ($(a) + (0.6,0)$) -- ($(a) + (0.6,1.7)$);
\draw ($(a) + (0.7,0)$) -- ($(a) + (0.7,0.2)$);
\draw ($(a) + (1.2,0)$) -- ($(a) + (1.2,0.8)$);
\draw ($(a) + (1.5,0)$) -- ($(a) + (1.5,0.2)$);
\draw ($(a) + (2.1,0)$) -- ($(a) + (2.1,0.5)$);
\draw ($(a) + (2.3,0)$) -- ($(a) + (2.3,2.3)$);
\draw ($(a) + (2.5,0)$) -- ($(a) + (2.5,0.3)$);
\node[above] at ($(a) + (\len/2,\len)$){$Q_1^*$};

\draw [-stealth]($(b) + (-0.5,0)$) -- ($(b) + (3,0)$);
\draw [-stealth]($(b) + (0,-0.5)$) -- ($(b) + (0,3)$);
\draw ($(b) + (0.5,0)$) -- ($(b) + (0.5,0.8)$);
\draw ($(b) + (0.6,0)$) -- ($(b) + (0.6,1.2)$);
\draw ($(b) + (0.7,0)$) -- ($(b) + (0.7,0.4)$);
\draw ($(b) + (1.2,0)$) -- ($(b) + (1.2,1.0)$);
\draw ($(b) + (1.5,0)$) -- ($(b) + (1.5,0.5)$);
\draw ($(b) + (2.1,0)$) -- ($(b) + (2.1,0.9)$);
\draw ($(b) + (2.3,0)$) -- ($(b) + (2.3,1.5)$);
\draw ($(b) + (2.5,0)$) -- ($(b) + (2.5,0.6)$);
\node[above] at ($(b) + (\len/2,\len)$){$Q_2^*$};

\draw [-stealth]($(c) + (-0.5,0)$) -- ($(c) + (3,0)$);
\draw [-stealth]($(c) + (0,-0.5)$) -- ($(c) + (0,3)$);
\draw ($(c) + (0.5,0)$) -- ($(c) + (0.5,0.1)$);
\draw ($(c) + (0.6,0)$) -- ($(c) + (0.6,2.4)$);
\draw ($(c) + (0.7,0)$) -- ($(c) + (0.7,0.5)$);
\draw ($(c) + (1.2,0)$) -- ($(c) + (1.2,0.3)$);
\draw ($(c) + (1.5,0)$) -- ($(c) + (1.5,0.7)$);
\draw ($(c) + (2.1,0)$) -- ($(c) + (2.1,0.2)$);
\draw ($(c) + (2.3,0)$) -- ($(c) + (2.3,1.8)$);
\draw ($(c) + (2.5,0)$) -- ($(c) + (2.5,0.6)$);
\node[above] at ($(c) + (\len/2,\len)$){$Q_3^*$};

\node at ($(c) + (\len+\spa/2,0)$) {\ldots};
\end{tikzpicture}

\caption[Diagram of the nHDP model.]{Possible realization of the discrete measures in the HPD, nDP and nHDP models. For the nDP and the nHDP models, the discrete measure $Q$ is represented as a rectangular box containing other discrete distributions $Q^*_j$ as its atoms. For the HDP instead, only the group-specific $G_\ell$ are represented.
The graphical depiction of each discrete distribution uses vertical lines to represent the atoms of the distribution: the location of each line represents the location of the atom, and the height of each line represents the atom's weight or probability. For the purpose of the plot, only a finite number of atoms are depicted. {\bf Top panel:} Diagram of the nDP model. {\bf Middle panel:} Diagram of the HDP model. {\bf Bottom panel:} Diagram of the nHDP model. 
}
\label{fig:diagram}
\end{figure}

\end{document}